\documentclass[11pt]{article}

\usepackage[margin=1in]{geometry}
\usepackage{amsmath,amssymb,amsthm}
\usepackage{mathtools}
\usepackage{booktabs}
\usepackage{enumitem}
\usepackage[colorlinks=true,linkcolor=blue,citecolor=blue,urlcolor=blue]{hyperref}

\theoremstyle{plain}
\newtheorem{theorem}{Theorem}
\newtheorem{lemma}{Lemma}
\newtheorem{proposition}{Proposition}

\theoremstyle{definition}
\newtheorem{definition}{Definition}
\newtheorem{remark}{Remark}

\DeclareMathOperator{\fix}{fix}
\DeclareMathOperator{\cfix}{cfix}
\DeclareMathOperator{\Aut}{Aut}
\DeclareMathOperator{\Stab}{Stab}
\DeclareMathOperator{\lcm}{lcm}
\newcommand{\Z}{\mathbb{Z}}
\newcommand{\orb}{\mathcal{B}}
\newcommand{\gtil}{\tilde g}
\newcommand{\ghat}{\hat g}
\newcommand{\ftil}{\tilde f}
\newcommand{\fhat}{\hat f}

\title{\bf Counting Unlabeled Chordal Graphs by Equivariant Evaporation}
\author{Matthew Sun\\[2pt]\normalsize Independent researcher, California, USA\\\normalsize\texttt{matthewsun42@gmail.com} / \texttt{mattsun1@mit.edu}}
\date{June 2026}

\begin{document}
\maketitle

\begin{abstract}
We compute the number of unlabeled chordal graphs on $n$ vertices, both the total count
(\href{https://oeis.org/A048193}{OEIS A048193}) and the connected count
(\href{https://oeis.org/A048192}{OEIS A048192}), extending two sequences whose published values had
remained at $n=15$. The method is a P\'olya--Burnside enumeration: the number of unlabeled graphs in
a class closed under relabeling is the average over $S_n$ of the number of labeled graphs fixed by
each permutation. The technical core is the evaluation, for an arbitrary permutation $\pi$, of the
number of $\pi$-invariant labeled chordal graphs. We give a dynamic program for this quantity that
lifts the evaporation-based labeled chordal counting of H\'ebert-Johnson, Lokshtanov and Vigoda to
the equivariant setting. Its central structural ingredient is a \emph{divisor-bundle decomposition}:
when a connected piece spans a cyclic orbit of size $c$, it forms, for each divisor $d\mid c$, a
$d$-fold bundle whose constituent is an object of the same kind in the cyclic world of order $c/d$,
computed by the same program recursively. We prove the decomposition and the correctness of the
resulting recurrences, and we prove that the full Burnside computation runs in sub-exponential time
$n^{O(\sqrt n)}$. We report the new terms through $n=20$ and describe four independent validations,
including exact agreement with all previously known values of both sequences and an Euler-transform
consistency check.
\end{abstract}

\section{Introduction}

A graph is \emph{chordal} if it has no induced cycle of length at least four. Chordal graphs are a
central class in algorithmic graph theory, characterized by perfect elimination orderings and by
tree decompositions whose width equals the clique number minus one. Two integer sequences count them
up to isomorphism:
\begin{align*}
\text{\href{https://oeis.org/A048193}{A048193}}(n)&=\#\{\text{chordal graphs on }n\text{ nodes}\},\\
\text{\href{https://oeis.org/A048192}{A048192}}(n)&=\#\{\text{connected chordal graphs on }n\text{ nodes}\},
\end{align*}
both up to isomorphism. Their published data extended only to $n=15$.

For the \emph{labeled} count, H\'ebert-Johnson, Lokshtanov and Vigoda~\cite{HLV2023} gave a
polynomial-time dynamic program based on an \emph{evaporation} process that repeatedly deletes all
simplicial vertices. Counting graphs \emph{up to isomorphism}, however, is governed by Burnside's
lemma and requires counting labeled chordal graphs invariant under each permutation of the vertices.
H\'ebert-Johnson and Lokshtanov~\cite{HLV2025} proved that counting labeled chordal graphs with a
prescribed automorphism is fixed-parameter tractable in the number of moved points $\mu$, with a
running time of $O(2^{7\mu}n^{9})$; in the regime relevant to the Burnside sum, where $\mu$ may be
as large as $n$, that bound is single-exponential in $n$, and no exact unlabeled enumeration was
carried out there.

\paragraph{Contributions.} Building on the labeled evaporation dynamic program of~\cite{HLV2023},
we give a concrete equivariant dynamic program for the number $\fix(\pi)$ of $\pi$-invariant labeled
chordal graphs, and use it to compute A048192 and A048193 beyond the previous frontier. Our results
are:
\begin{enumerate}[topsep=2pt,itemsep=1pt]
\item A \emph{divisor-bundle} structure theorem (the Component-Orbit Decomposition,
Theorem~\ref{thm:cod}) describing how connected pieces of a $\pi$-invariant graph decompose, and
reducing each equivariant recurrence to its labeled counterpart of~\cite{HLV2023}
(Section~\ref{sec:lift}).
\item A correctness theorem (Theorem~\ref{thm:correct}) for the assembled program, conditional only
on the labeled recurrences of~\cite{HLV2023}.
\item A running-time theorem (Theorem~\ref{thm:time}): the entire computation of A048192$(n)$ and
A048193$(n)$ runs in sub-exponential time $n^{O(\sqrt n)}=e^{O(\sqrt n\log n)}$, in contrast to the
single-exponential worst case of the general parameterized bound.
\item The new values through $n=20$ (Section~\ref{sec:results}), with four independent validations.
\end{enumerate}

\section{Preliminaries}

\subsection{The Burnside reduction}

Let $\mathcal H$ be a class of graphs on a fixed vertex set that is closed under relabeling
(chordality is such a property). The number of isomorphism classes of $\mathcal H$-graphs on $n$
vertices is, by the Cauchy--Frobenius--Burnside lemma,
\begin{equation}\label{eq:burnside}
a(n)=\frac{1}{n!}\sum_{\pi\in S_n}\fix(\pi)
   =\sum_{\lambda\vdash n}\frac{\fix(\lambda)}{z_\lambda},
\qquad z_\lambda=\prod_{i\ge 1} i^{m_i}\,m_i!,
\end{equation}
where $\fix(\pi)$ is the number of $\mathcal H$-graphs invariant under $\pi$ (i.e.\ with
$\pi\in\Aut(G)$), a quantity that depends only on the cycle type $\lambda=(1^{m_1}2^{m_2}\cdots)$ of
$\pi$, and $z_\lambda=n!/|\{\pi:\text{type }\lambda\}|$. The connected count obeys the same identity
with $\fix$ replaced by $\cfix$, the number of \emph{connected} invariant graphs, since $S_n$ acts on
connected labeled graphs as well. Thus everything reduces to computing $\fix(\lambda)$ and
$\cfix(\lambda)$ for each cycle type $\lambda$.

Throughout, fix $\pi\in S_V$ with $V$ the vertex set, $|V|=n$, and let $N=\lcm$ of the cycle lengths
$=\operatorname{ord}(\pi)$. The orbits of $\pi$ partition $V$; an orbit of size $c$ is a
\emph{$c$-cycle}, on which $\pi$ acts as a single cycle. A part of size $c$ in $\lambda$ is such an
orbit.

\subsection{Evaporation}

\begin{definition}[Evaporation with an exception clique; \cite{HLV2023}]\label{def:evap}
A vertex $v$ of a graph $G$ is \emph{simplicial} if its neighborhood $N_G(v)$ is a clique. Given a
clique $X\subseteq V(G)$, set $G_0=G$ and, for $t\ge1$, let $L_t$ be the set of simplicial vertices
of $G_{t-1}$ not in $X$ and put $G_t=G_{t-1}\setminus L_t$. For chordal $G$ this terminates with
$V(G_t)=X$; the sequence $L_1,\dots,L_T$ is the \emph{evaporation of $G$ with exception $X$}, $G$
\emph{evaporates at time $T$}, the \emph{evaporation time} of $v\notin X$ is the index $t$ with
$v\in L_t$, and $L_G(X):=L_T$ is the last layer.
\end{definition}

We write $\tau(v)$ for the evaporation time of $v$.

\section{The orbit-level process and its structure}

We run the evaporation process on a $\pi$-invariant chordal graph with a $\pi$-invariant exception
clique. The following lemma is what makes an equivariant dynamic program well-defined: the process
acts on $\pi$-orbits, not on individual vertices.

\begin{lemma}[Orbit-evaporation]\label{lem:orbit}
Let $G$ be chordal with $\pi\in\Aut(G)$, and let the exception clique $X$ be $\pi$-invariant. Then
$\pi(L_t)=L_t$ and $\pi\in\Aut(G_t)$ for every $t$. Consequently $\tau$ is constant on each
$\pi$-orbit: every orbit evaporates in a single layer.
\end{lemma}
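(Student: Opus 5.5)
The plan is an induction on $t$ carrying the joint statement $P(t)$: $\pi\in\Aut(G_t)$ and $\pi(V(G_t))=V(G_t)$. The base case $t=0$ is the hypothesis $\pi\in\Aut(G)$, since $G_0=G$.

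The key observation is that simpliciality is preserved by automorphisms. If $H$ is any graph and $\sigma\in\Aut(H)$, then $N_H(\sigma v)=\sigma(N_H(v))$. Because $\sigma$ maps cliques to cliques, and $\sigma^{-1}$ is also an automorphism, $v$ is simplicial in $H$ if and only if $\sigma v$ is. Hence $\sigma$ maps the set of simplicial vertices of $H$ bijectively onto itself.

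For the inductive step, assume $P(t-1)$ and apply the observation to $H=G_{t-1}$ with $\sigma$ the restriction of $\pi$ to $V(G_{t-1})$. The simplicial set $S_{t-1}$ of $G_{t-1}$ satisfies $\pi(S_{t-1})=S_{t-1}$. Since $\pi(X)=X$ by hypothesis, $\pi$ also preserves $V\setminus X$, and so
\[
\pi(L_t)=\pi(S_{t-1}\setminus X)=\pi(S_{t-1})\setminus\pi(X)=L_t .
\]
Then $V(G_t)=V(G_{t-1})\setminus L_t$ is $\pi$-invariant, because it is a difference of two invariant sets and $\pi$ is a bijection. Finally, $G_t$ is an induced subgraph of $G_{t-1}$ on a $\pi$-invariant vertex set. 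The restriction of an automorphism to an invariant vertex set is an automorphism of the induced subgraph, so $\pi\in\Aut(G_t)$, which is $P(t)$.

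For the consequence, take $v\notin X$ with $v\in L_t$. Then $\pi^k v\in L_t$ for all $k$, so $\tau(\pi^k v)=\tau(v)$ and the orbit of $v$ lies inside a single layer. The layers are disjoint by construction. Also, $X$ is $\pi$-invariant, so an orbit meeting $X$ lies entirely in $X$, and every orbit outside $X$ evaporates in exactly one layer.

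No step is genuinely hard. The only point needing care is the one used twice in the inductive step: $\pi$ preserves $X$ and the vertex sets of the $G_t$ themselves, not merely the edges. This is why the statement requires $X$ to be $\pi$-invariant. Termination of the process with $V(G_T)=X$ for chordal $G$ is taken from Definition~\ref{def:evap} and is not needed for invariance.
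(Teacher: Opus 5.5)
Your proof is correct and matches the paper's argument step for step. Both proceed by induction on $t$, show that automorphisms preserve simpliciality (so the simplicial set of $G_{t-1}$ is $\pi$-stable), remove the $\pi$-invariant $X$ to get $\pi(L_t)=L_t$, and then restrict $\pi$ to the invariant vertex set of $G_t$. Your explicit tracking of $\pi(V(G_t))=V(G_t)$, and your remark that orbits meeting $X$ lie entirely in $X$, only spell out points the paper leaves implicit.
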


\begin{proof}
For any $\pi$-invariant graph $H$ one has $N_H(\pi v)=\pi(N_H(v))$, and $\pi$ maps cliques to
cliques; hence $v$ is simplicial in $H$ iff $\pi v$ is. We induct on $t$. Base: $G_0=G$ is
$\pi$-invariant. Step: if $\pi\in\Aut(G_{t-1})$ then simpliciality in $G_{t-1}$ is $\pi$-stable and
$X$ is $\pi$-invariant, so $\pi(L_t)=L_t$; deleting the $\pi$-invariant set $L_t$ leaves $\pi$ an
adjacency-preserving bijection of $V(G_t)$, i.e.\ $\pi\in\Aut(G_t)$. Since $\pi(L_t)=L_t$, each
$\pi$-orbit is contained in or disjoint from $L_t$, so $\tau$ is constant on orbits.
\end{proof}

A connected component need not be $\pi$-invariant; rather $\pi$ permutes the components. The orbit of
a component is a \emph{bundle}.

\begin{lemma}[Component-orbit bundle]\label{lem:bundle}
Let $G$ be $\pi$-invariant and $K$ a connected component of $G$, with $\pi$-orbit
$\orb=\{K,\pi K,\dots,\pi^{d-1}K\}$ of size $d$. Then:
\begin{enumerate}[topsep=1pt,itemsep=0pt,label=\textup{(\roman*)}]
\item $d\mid N$ and $\Stab(K)=\langle\pi^{d}\rangle$, so $K$ is $\pi^{d}$-invariant;
\item the $d$ components of $\orb$ are pairwise isomorphic and $\bigcup_j\pi^jK$ is $\pi$-invariant.
\end{enumerate}
\end{lemma}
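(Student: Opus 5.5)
The plan is to treat this as the orbit--stabilizer theorem for the cyclic group $\langle\pi\rangle$, which has order $N$, acting on the set of connected components of $G$. First I will check that this action is well defined. Since $\pi\in\Aut(G)$, it maps connected vertex sets to connected vertex sets and preserves adjacency. So for any component $K$, the image $\pi K$ is connected and maximal with that property, hence again a component. (Maximality transfers because $\pi^{-1}$ is also an automorphism.) Thus $\langle\pi\rangle$ permutes the finite set of components.

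For (i), I apply orbit--stabilizer. The stabilizer $\Stab(K)$ is a subgroup of the cyclic group $\langle\pi\rangle\cong\Z/N$, so it is cyclic and generated by $\pi^{e}$ for some $e\mid N$. Its index in $\langle\pi\rangle$ is $e$, and the orbit size equals this index, so $e=d$. Hence $d\mid N$ and $\Stab(K)=\langle\pi^d\rangle$, so $\pi^dK=K$. I will also record the concrete form: $d$ is the least positive integer with $\pi^dK=K$, and the components $K,\pi K,\dots,\pi^{d-1}K$ are distinct. This follows because $\pi^iK=\pi^jK$ with $0\le i<j<d$ would give $\pi^{j-i}K=K$ with $0<j-i<d$, contradicting minimality. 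This matches the listed form of $\orb$.

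Here $K$ means a vertex set, or the induced subgraph on it, and $\pi^d$ maps that set to itself. It is therefore an automorphism of $G[K]$, because $\pi^d$ preserves adjacency in $G$ and restriction to an invariant set preserves this. That is exactly what ``$K$ is $\pi^d$-invariant'' means.

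For (ii), the restriction $\pi^j|_K\colon K\to\pi^jK$ is a bijection that preserves adjacency and non-adjacency, since $\pi^j\in\Aut(G)$. It is therefore an isomorphism $G[K]\cong G[\pi^jK]$, so all members of $\orb$ are pairwise isomorphic. For invariance of the union $U=\bigcup_{j=0}^{d-1}\pi^jK$, I compute $\pi U=\bigcup_{j=1}^{d}\pi^jK$ and use $\pi^dK=K$ to conclude $\pi U=U$.

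There is no real obstacle here. The only point needing care is the precise sense of ``invariant''. I will state explicitly that it means set-invariance of the vertex set, together with the automorphism property of the induced subgraph $G[K]$ or $G[U]$. This follows immediately because restrictions of automorphisms of $G$ to invariant sets are automorphisms of the corresponding induced subgraphs.
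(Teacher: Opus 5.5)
Your proposal is correct and follows essentially the same route as the paper: orbit--stabilizer in the cyclic group $\langle\pi\rangle$ of order $N$, plus the fact that its subgroup of index $d$ is exactly $\langle\pi^d\rangle$, with (ii) obtained by restricting $\pi^j$ to $K$. Your extra checks, that $\pi$ permutes components and what ``invariant'' means for induced subgraphs, are details the paper leaves implicit; they are not a different argument.
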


\begin{proof}
$\langle\pi\rangle$ is cyclic of order $N$. By orbit--stabilizer $[\langle\pi\rangle:\Stab(K)]=d$,
so $|\Stab(K)|=N/d$ and $d\mid N$. A cyclic group has a unique subgroup of each order dividing $N$;
the one of order $N/d$ is $\langle\pi^d\rangle$, giving (i). Each $\pi^j$ restricts to an isomorphism
$K\to\pi^jK$, and a union of a $\langle\pi\rangle$-orbit is $\pi$-invariant, giving (ii).
\end{proof}

\begin{lemma}[Block decomposition]\label{lem:block}
Let $O$ be a $\pi$-orbit of size $s$ and $H=\langle\pi^d\rangle$ for $d\mid N$. Then $H$ acts on $O$
with exactly $\gcd(d,s)$ orbits (\emph{blocks}), each of size $s/\gcd(d,s)$. Any
$\pi^{d}$-invariant set meets $O$ in a union of whole blocks.
\end{lemma}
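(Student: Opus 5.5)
The plan is to identify the action of $\pi$ on $O$ with the regular action of $\Z/s\Z$ on itself, and then reduce the count of $H$-orbits to the structure of a cyclic subgroup. Pick $v\in O$. Since $\pi$ acts on $O$ as a single $s$-cycle, the map $\Z/s\Z\to O$, $k\mapsto\pi^k v$, is a bijection. Under this bijection $\pi$ corresponds to translation by $1$, and so $\pi^{d}$ corresponds to translation by $d$.

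The orbits of $H=\langle\pi^d\rangle$ on $O$ are therefore the cosets of the subgroup $\langle d\rangle\le\Z/s\Z$. The key step is the standard fact that $\langle d\rangle=\langle\gcd(d,s)\rangle$ in $\Z/s\Z$. The inclusion $\langle d\rangle\subseteq\langle\gcd(d,s)\rangle$ is clear. For the reverse, Bézout gives $\gcd(d,s)=ad+bs\equiv ad\pmod s$. Hence $\langle d\rangle$ is the subgroup of multiples of $g=\gcd(d,s)$. This subgroup has order $s/g$ and index $g$. So there are exactly $g$ cosets, i.e.\ $g$ blocks, each of size $s/g$. Concretely, the block containing $\pi^k v$ is $\{\pi^{k+jg}v: 0\le j<s/g\}$.

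For the last assertion, let $S$ be $\pi^d$-invariant, meaning $\pi^d(S)=S$. Then $S$ is invariant under every element of $H$, so for any $u\in S\cap O$ the whole orbit $Hu$ lies in $S$. Since $O$ is itself $H$-stable, $Hu\subseteq S\cap O$. Thus $S\cap O$ is a union of $H$-orbits in $O$, that is, of whole blocks.

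The only point requiring any care is the subgroup identity via Bézout; everything else is routine. Note that the hypothesis $d\mid N$ is not actually needed for the counting statement.
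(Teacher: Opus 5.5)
Your proof is correct and takes the same route as the paper. It identifies $O$ with $\Z/s\Z$ so that $\pi^d$ acts by translation by $d$, counts the cosets of $\langle d\rangle=\langle\gcd(d,s)\rangle$, and notes that a $\pi^d$-invariant set is a union of $H$-orbits; your explicit B\'ezout step and your remark that $d\mid N$ is unnecessary are both accurate.
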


\begin{proof}
Identify $O\cong\Z/s\Z$ with $\pi|_O$ translation by $1$, so $\pi^{d}|_O$ is translation by $d$. The
orbits of $\langle d\rangle\le\Z/s\Z$ are the cosets of the subgroup generated by $d$, which has
order $s/\gcd(d,s)$; there are $\gcd(d,s)$ of them. A $\pi^d$-invariant set is a union of such
orbits.
\end{proof}

\begin{lemma}[Recursive sub-world]\label{lem:sub}
With $K$, $\orb$, $d$ as in Lemma~\ref{lem:bundle}, the component $K$ is a connected chordal graph
invariant under the cyclic group $H=\langle\pi^{d}\rangle$ of order $N/d$. Every $\pi$-orbit $O$ of
size $s$ that $K$ meets contributes to $K$ a union of blocks of size $s/\gcd(d,s)$. Moreover, if $O$
has size $c$ with $d\mid c$ and is one of the orbits spanned by $\orb$, then $K$ meets $O$ in exactly
one block, a single $H$-orbit of size $c/d$. Hence $K$ is an object of the same type on a strictly
smaller cycle type, and the bundle structure recurses, terminating at $d=N$ in the labeled world.
\end{lemma}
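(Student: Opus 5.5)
The plan is to prove the three assertions in order, using only Lemmas~\ref{lem:bundle} and~\ref{lem:block} together with the fact that the members of $\orb$ are distinct components of $G$, hence pairwise vertex-disjoint.

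\textbf{Step 1: $K$ is a connected chordal graph invariant under $H$.} $K$ is an induced subgraph of the chordal graph $G$, so it is chordal, and it is connected by definition. By Lemma~\ref{lem:bundle}(i), $\Stab(K)=\langle\pi^d\rangle=H$, and $H$ has order $N/d$. Since $\pi^d$ is an automorphism of $G$ that maps $K$ onto itself, its restriction to $V(K)$ preserves adjacency. Hence $H$ acts on $K$ by automorphisms.

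\textbf{Step 2: contribution of a single orbit.} Let $O$ be a $\pi$-orbit of size $s$ that $K$ meets. The set $V(K)$ is $\pi^d$-invariant, so by Lemma~\ref{lem:block} the intersection $V(K)\cap O$ is a union of $H$-blocks of $O$, each of size $s/\gcd(d,s)$.

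\textbf{Step 3: the spanned orbit of size $c$ with $d\mid c$.} This is the core step. Let $U=\bigcup_{j}\pi^jK$.
\begin{itemize}[topsep=1pt,itemsep=0pt]
\item By Lemma~\ref{lem:bundle}(ii), $U$ is $\pi$-invariant. Since $U$ meets $O$, it contains all of $O$.
\item Identify $O\cong\Z/c\Z$ with $\pi$ acting as $+1$. Since $\gcd(d,c)=d$, the $H$-blocks are the $d$ residue classes modulo $d$, so they are indexed by $\Z/d\Z$, and $\pi$ acts on them as $+1$.
\item Let $S\subseteq\Z/d\Z$ be the set of blocks contained in $K\cap O$; it is nonempty. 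Then $\pi^jK\cap O$ corresponds to the translate $S+j$.
\item The components $\pi^jK$, $0\le j<d$, are pairwise disjoint and together cover $O$. So the translates $S+j$ partition $\Z/d\Z$.
\item Counting gives $d\,|S|=d$, hence $|S|=1$. Thus $K$ meets $O$ in exactly one $H$-orbit, of size $c/d$.
\end{itemize}
The only delicate point, and where I expect the real care to be needed, is making precise the hypothesis ``$O$ is spanned by $\orb$''. I would read it as $O\subseteq U$, which is automatic once $K$ meets $O$ as shown above. I would also check that Step~2 is consistent here: blocks of size $c/\gcd(d,c)=c/d$.

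\textbf{Step 4: recursion and termination.} Relabel $V(K)$ so that $H$, with generator $\pi^d|_{V(K)}$, becomes a permutation whose cycle type is given by Steps~2--3. For $d>1$ we have $|V(K)|<|V(U)|\le n$, so $K$ is an instance of the same problem (a connected chordal graph invariant under a cyclic permutation) on strictly fewer vertices. When $d=1$, $K$ is itself $\pi$-invariant and no reduction is claimed. Applying Lemma~\ref{lem:bundle} again inside $K$ with the group $H$ of order $N/d$ iterates the construction. The order strictly decreases at each genuine step ($d>1$), so the recursion terminates. It reaches $d=N$, where $H=\langle\pi^N\rangle$ is trivial and $K$ is an ordinary labeled connected chordal graph, which is the setting of~\cite{HLV2023}.
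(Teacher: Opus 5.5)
Your proof is correct and follows the paper's argument. You use $H$-invariance of $K$ together with Lemma~\ref{lem:block} for the block structure, and then the $d$ pairwise-disjoint translates $\pi^jK$, which each meet $O$ and together cover it. The only cosmetic differences are that you count blocks (the translates $S+j$ partition $\Z/d\Z$, so $|S|=1$) where the paper counts vertices ($d\,|K\cap O|=c$), and your termination measure is $N/d<N$ where the paper uses $c/d<c$.
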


\begin{proof}
$K$ is $H$-invariant (Lemma~\ref{lem:bundle}(i)) and $H$ is cyclic, so by Lemma~\ref{lem:block} $K\cap
O$ is a union of blocks of size $s/\gcd(d,s)$. For a spanned $O$ with $d\mid c$ we have
$\gcd(d,c)=d$, so $O$ has exactly $d$ blocks, each of size $c/d$. The bundle has period $d$, so its
$d$ components are distinct and partition $\bigcup_j\pi^jK\supseteq O$. As $K\cap O\ne\varnothing$ and
$\pi O=O$, each translate meets $O$ in $\pi^j(K\cap O)\ne\varnothing$; these $d$ sets are pairwise
disjoint and cover $O$, and $\pi^j$ is a bijection, so $d\,|K\cap O|=|O|=c$ and $|K\cap O|=c/d$. Since
$K\cap O$ is a union of size-$(c/d)$ blocks of total size $c/d$, it is exactly one block. The spanned
orbit of size $c$ thus becomes a sub-orbit of size $c/d<c$ (as $d>1$), so the recursion is
well-founded and ends when $d=N$, where $H=\{\mathrm{id}\}$ and $K$ is an unconstrained labeled graph.
\end{proof}

\section{The Component-Orbit Decomposition}\label{sec:cod}

For a cyclic world of order $M$ and a cycle type $\nu$, let $g_1^{\langle M\rangle}(t,x,\nu)$ denote
the number of configurations consisting of a single connected component-bundle on the new orbits
$\nu$, seeing a prescribed boundary, with evaporation time $t$ (Section~\ref{sec:dp} makes the
boundary parameters precise; for the present theorem only its existence is used). The labeled count
$\sum_t g_1^{\langle 1\rangle}(t,0,1^k)=\text{\href{https://oeis.org/A007134}{A007134}}(k)$ is the
number of connected labeled chordal graphs on $k$ vertices~\cite{HLV2023}.

\begin{theorem}[Component-Orbit Decomposition]\label{thm:cod}
Let $\mu$ be a multiset of new orbits and $d$ a positive integer dividing every size occurring in
$\mu$. The number of period-$d$ component-bundles spanning exactly $\mu$, seeing a prescribed set of
boundary blocks, with evaporation time $t$, equals
\[
d^{\,|\mu|-1}\; g_1^{\langle N/d\rangle}\!\big(t,\;x',\;\mu/d\big),
\]
where $\mu/d$ replaces each size $s$ by $s/d$, the order-$(N/d)$ world is that of
Lemma~\ref{lem:sub}, and the boundary $x'$ records, for each boundary orbit of size $s$, its
$\gcd(d,s)$ blocks of sub-type $s/\gcd(d,s)$. For $d=1$ the term is the equivariant count $g_1(t,x,\mu)$.
\end{theorem}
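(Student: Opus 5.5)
We prove the count by a bijection that sends a period-$d$ bundle to a pair consisting of a base component and a choice of block labels. We then count how many bundles give the same base component.

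\textbf{Step 1 (from bundle to representative).} Fix a period-$d$ bundle $\orb=\{K,\pi K,\dots,\pi^{d-1}K\}$ spanning exactly $\mu$. By Lemma~\ref{lem:bundle}, $K$ is $H=\langle\pi^d\rangle$-invariant. By Lemma~\ref{lem:sub}, for every orbit $O\in\mu$ (size $c$, $d\mid c$), $K$ meets $O$ in exactly one of the $d$ blocks of $O$, an $H$-orbit of size $c/d$. For each boundary orbit of size $s$, Lemma~\ref{lem:block} shows that $K$ sees a union of its $\gcd(d,s)$ blocks, each of sub-type $s/\gcd(d,s)$. This is exactly the boundary datum $x'$.

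\textbf{Step 2 (conversion of the evaporation time).} Evaporation is defined on the graph. The components of $\orb$ are pairwise isomorphic via $\pi^j$, and the boundary is $\pi$-invariant. So the evaporation time of the bundle (constant on orbits by Lemma~\ref{lem:orbit}) equals the evaporation time of $K$ computed in the $H$-world with boundary $x'$. Hence $K$, viewed on its chosen blocks, is counted by $g_1^{\langle N/d\rangle}(t,x',\mu/d)$.

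\textbf{Step 3 (inverse map and the factor $d^{|\mu|-1}$).} Conversely, the data consist of a choice of one block $B_O$ in each orbit $O\in\mu$ and an $H$-invariant connected configuration $K$ on $\bigcup_O B_O$ seeing $x'$. These data determine the bundle $\bigcup_j\pi^jK$. The translates $\pi^jK$ for $0\le j<d$ are vertex-disjoint, because they occupy distinct blocks of each spanned orbit. The union is $\pi$-invariant, and its period is exactly $d$.

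\begin{itemize}
\item \emph{Size of the preimage.} The bundle determines $K$ only up to replacing it by some $\pi^jK$. Shifting by $\pi$ moves every chosen block simultaneously to the next block of its orbit.
\item \emph{Normalization.} Normalize by requiring that the block chosen in a distinguished first orbit $O_1$ of $\mu$ (under any fixed ordering) is a fixed block $B^\ast$. Exactly one translate $\pi^jK$ satisfies this, because $\pi^j$ acts regularly on the $d$ blocks of $O_1$.
\item \emph{Free choices.} The remaining $|\mu|-1$ orbits each have $d$ free choices of block.
\item \emph{Identification of sub-worlds.} For each such choice, the $H$-world on the chosen blocks is canonically isomorphic to the standard order-$(N/d)$ world of cycle type $\mu/d$. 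Identify a block $\{v,\pi^dv,\dots\}$ with a $(c/d)$-cycle of $\pi^d$; the boundary blocks are identified likewise.
\end{itemize}

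So the configurations counted for each choice are in bijection with those counted by $g_1^{\langle N/d\rangle}(t,x',\mu/d)$. Multiplying by the $d^{|\mu|-1}$ choices gives the formula. For $d=1$ there is a single block per orbit, $H=\langle\pi\rangle$, and the statement is tautological.

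\textbf{Main obstacle.} The main obstacle is the canonical identification in Step 3. The count $g_1^{\langle N/d\rangle}$ must not depend on which blocks were chosen, or on how the boundary blocks sit relative to the chosen ones. This is delicate when a boundary orbit of size $s$ has $\gcd(d,s)$ blocks that $K$ may see in several inequivalent ways.

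I would first handle this by showing that $\pi$ itself (not only $\pi^d$) carries one block choice to another while preserving the boundary. However, $\pi$ fixes only simultaneous shifts, which is exactly why only the first orbit can be normalized. The independent shifts of the other orbits then have to be realized by relabelings $\sigma$ that commute with $\pi^d$ and act as powers of $\pi$ on each orbit separately. Such a $\sigma$ preserves the $H$-structure and maps boundary blocks of sub-type $s/\gcd(d,s)$ to blocks of the same sub-type, so it transports the boundary datum $x'$ consistently, since $x'$ records blocks only by sub-type and incidence.
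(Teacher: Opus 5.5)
Your proposal is correct and follows essentially the same route as the paper: fix the representative by a distinguished block of $O_1$ (the paper uses the block containing the least vertex $w_1$), let the blocks of the remaining $|\mu|-1$ spanned orbits range freely to get the factor $d^{|\mu|-1}$, and identify each alignment with the sub-world count $g_1^{\langle N/d\rangle}(t,x',\mu/d)$ via a relabeling that shifts new-orbit blocks independently while commuting with $\pi^d$. One thing to tighten is the boundary: since the theorem prescribes a fixed \emph{set} of seen blocks, take your relabeling $\sigma$ to be the identity on the boundary orbits. It then preserves that set literally, which is exactly the paper's remark that boundary incidences are unaffected by the new-orbit alignment.
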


\begin{proof}
By Lemma~\ref{lem:sub} a representative $K$ of such a bundle is, with its incidences to the boundary
blocks, a connected configuration in the order-$(N/d)$ world on the sub-orbits $\mu/d$ (each spanned
size-$s$ orbit met in a single sub-orbit of size $s/d$) and on the boundary blocks $x'$; its
evaporation time is $t$ because, by Lemma~\ref{lem:orbit} applied to $\langle\pi\rangle\supseteq
\langle\pi^d\rangle$, $\tau$ is constant on the $\langle\pi^d\rangle$-orbits that are the sub-vertices
of $K$, so the sub-world evaporation has the same layers. Conversely the map $K\mapsto\bigcup_j\pi^jK$
sends each such sub-configuration to a period-$d$ bundle (a disjoint union of the chordal $\pi^jK$,
$\pi$-invariant by Lemma~\ref{lem:bundle}(ii)).

It remains to count the bundles. Fix a distinguished spanned orbit $O_1$ and let $w_1$ be its least
vertex. We exhibit a bijection
\[
\big\{\text{period-}d\text{ bundles on }\mu\big\}
\ \longleftrightarrow\
\big\{\text{order-}(N/d)\text{ sub-configurations}\big\}\times\{1,\dots,d\}^{\,|\mu|-1}.
\]

\emph{Forward map.} Given a bundle $\orb$, let $K$ be the component containing $w_1$ (well-defined:
$w_1$ lies in exactly one of the $d$ blocks of $O_1$, hence in exactly one component, since the
translates partition $O_1$ by the argument of Lemma~\ref{lem:sub}). Record the isomorphism type of
$K$ as an order-$(N/d)$ sub-configuration, and the alignment $a=(a_i)_{i\ge2}$, where $a_i\in\{1,\dots,d\}$
names which block of $O_i$ lies in $K$ (each $O_i$ contributes exactly one block to $K$ by
Lemma~\ref{lem:sub}); the block of $O_1$ in $K$ is fixed to be the one containing $w_1$.

\emph{The period is exactly $d$.} For $0<j<d$, $\pi^{j}K$ meets $O_1$ in $\pi^{j}(K\cap O_1)$, a
block distinct from $K\cap O_1$ (the $d$ blocks of $O_1$ are permuted freely by $\pi$, as in
Lemma~\ref{lem:sub}); hence $\pi^{j}K\neq K$, so $\orb$ has exactly $d$ distinct components and the
forward map lands in period-$d$ bundles regardless of any abstract automorphism of $K$. This rules
out the only possible source of over-collapse — a representative with internal periodic symmetry —
because the symmetry that would matter, $\pi^{j}K=K$, is forbidden by the block displacement.

\emph{Injectivity.} The type of $K$ and the alignment determine the vertex set and edges of $K$
exactly (the alignment fixes which concrete blocks of each $O_i$ are in $K$, and the type fixes the
adjacencies among them and to the boundary blocks); $K$ then determines $\orb=\{K,\pi K,\dots,\pi^{d-1}K\}$.

\emph{Surjectivity and freeness.} Conversely, any sub-configuration $K$ and alignment
$a\in\{1,\dots,d\}^{|\mu|-1}$ yields a bundle: place $K$ on the blocks named by $a$ (and the
$w_1$-block of $O_1$), set $\orb=\bigcup_j\pi^jK$. This is $\pi$-invariant (Lemma~\ref{lem:bundle}),
chordal (disjoint union of chordal pieces), of period $d$ (above), and spans $\mu$ with the
prescribed block-seeing. The alignment choices are unconstrained: changing $a_i$ relabels the
order-$(N/d)$ sub-orbit of $K$ on $O_i$ to a different block of $O_i$, a graph isomorphism that
preserves connectivity and all boundary incidences (those are recorded against boundary
\emph{blocks}, which are fixed and shared by all components, hence unaffected by the new-orbit
alignment). Thus the boundary-seeing data and the alignment are independent, and there are exactly
$d^{\,|\mu|-1}$ alignments.

Therefore the number of bundles is $d^{\,|\mu|-1}$ times the number of order-$(N/d)$
sub-configurations of the prescribed type, which is $g_1^{\langle N/d\rangle}(t,x',\mu/d)$.
\end{proof}

Summing over admissible periods gives the count of all bundles spanning $\mu$:
\begin{equation}\label{eq:codsum}
\text{COD}(t,x,\mu)=\underbrace{g_1(t,x,\mu)}_{d=1}
   +\sum_{\substack{d>1\\ d\mid\gcd(\mu)}} d^{\,|\mu|-1}\,g_1^{\langle N/d\rangle}(t,x',\mu/d),
\end{equation}
where the boundary parameter $x'$ and the admissible boundary set are specialized to the condition
in force (Section~\ref{sec:lift}).

\section{The dynamic program and its correctness}\label{sec:dp}

\subsection{State variables and the block decomposition}\label{sec:state}

Fix the distinct orbit sizes $S=(s_1<\cdots<s_r)$ of $\pi$. All state variables are integer vectors
indexed by $S$:
\[
x,\,k,\,z,\,l\ \in\ \Z_{\ge0}^{\,r},\qquad z\le x\ \text{(componentwise)}.
\]
Here $x_i$ is the number of \emph{boundary} orbits of size $s_i$ (their union $X$ is a clique);
$k_i$ is the number of \emph{new} orbits of size $s_i$; $l_i$ (for the $f$-functions) is the number
of orbits in the last evaporation layer $L_G(X)$; and $z$ is an inclusion--exclusion sub-boundary,
specifying — under a fixed linear order on the boundary orbits — a downward-closed sub-clique used to
express ``the piece reaches beyond $z$.'' A vertex count is recovered as $|X|=\sum_i s_i x_i$, etc.
There is no separate ``seeing'' state variable: which boundary a piece sees is summed over inside the
recurrences (below), exactly as in the labeled case, where it is the variable $x'\le x$.

The one genuinely new ingredient is how a \emph{bundle} interacts with the boundary. By
Lemma~\ref{lem:block}, a period-$d$ bundle does not see boundary \emph{orbits} but boundary
\emph{blocks}: a boundary orbit of size $s_i$ presents $\gcd(d,s_i)$ blocks of sub-type
$\tau_i:=s_i/\gcd(d,s_i)$. We therefore define, for each $d\mid N$, the \emph{block profile} of the
boundary $x$ as the vector indexed by the distinct sub-types $\{\tau_i\}$:
\begin{equation}\label{eq:blockprofile}
\beta_d(x)_\tau \;=\; \sum_{i:\ \tau_i=\tau}\gcd(d,s_i)\,x_i
\qquad(\text{number of blocks of sub-type }\tau),
\end{equation}
and the associated boundary generating function in formal variables $\{v_\tau\}$,
\begin{equation}\label{eq:gf}
\Gamma_d(x)\;=\;\prod_{i=1}^{r}\big(1+v_{\tau_i}\big)^{\gcd(d,s_i)\,x_i},
\qquad
\Gamma_d^{\mathrm{cov}}(x)\;=\;\prod_{i=1}^{r}\Big(\big(1+v_{\tau_i}\big)^{\gcd(d,s_i)}-1\Big)^{x_i}.
\end{equation}
The coefficient of $\prod_\tau v_\tau^{\,a_\tau}$ in $\Gamma_d(x)$ is the number of ways a bundle's
sub-component can see exactly $a_\tau$ blocks of each sub-type; $\Gamma_d^{\mathrm{cov}}$ is the same
restricted to seeing at least one block of every boundary orbit (``covers all of $X$''). The
parenthetical is exact, not merely suggestive: by Lemma~\ref{lem:block} the quotient
$\langle\pi\rangle/\langle\pi^{d}\rangle\cong\Z/\gcd(d,s_i)\Z$ permutes the $\gcd(d,s_i)$ blocks of
each boundary orbit $O_i$ transitively, so the full period-$d$ bundle
$K\cup\pi K\cup\dots\cup\pi^{d-1}K$ is adjacent to all of $O_i$ exactly when its representative $K$
meets at least one block of $O_i$ (the $d$ translates sweep that block across the whole orbit).
Thus $\Gamma_d^{\mathrm{cov}}$ counts component-orbits that, taken as a whole, see all of $X$, even
when no single constituent component does. These are
the equivariant analogues of the labeled coefficients $\binom{x}{x'}$ and $\binom{x}{x}$, and the
sub-type vector $x'=(a_\tau)_\tau$ is the boundary argument of the sub-world function in
Theorem~\ref{thm:cod}. With $d=1$ every block is a whole orbit ($\tau_i=s_i$, $\gcd=1$) and
$\Gamma_1(x)=\prod_i(1+v_{s_i})^{x_i}$ reduces to choosing a sub-vector $x'\le x$ of boundary orbits,
recovering the labeled situation.

\subsection{The counter functions}

We use the eight functions of~\cite[Def.\ 3.3]{HLV2023}, read on the orbit-level evaporation of
Lemma~\ref{lem:orbit} over the state variables above; ``component'' means component-bundle. Briefly,
for a $\pi$-invariant chordal $G$ with $X$ a clique:
\begin{center}\small
\begin{tabular}{ll}
\toprule
$g(t,x,k,z)$ & every component of $G\setminus X$ sees $X\setminus[z]$; evaporates in time $\le t$\\
$\gtil(t,x,k,z)$ & as $g$, with every component evaporating at time exactly $t$\\
$\ghat(t,x,k,z)$ & as $\gtil$, with no component seeing all of $X$\\
$g_1(t,x,k)$ & as $\gtil$, exactly one component, seeing all of $X$\\
$g_2(t,x,k)$ & as $g_1$ but $\ge 2$ components, each seeing all of $X$\\
$f(t,x,l,k)$ & connected; evaporates at exactly $t$; $L_G(X)=[x{+}1,x{+}l]$; $X\cup L_G(X)$ a clique\\
$\ftil(t,x,l,k)$ & as $f$, components of $G\setminus(X\cup L)$ evaporate at exactly $t{-}1$ ($\ge1$)\\
$\fhat(t,x,l,k)$ & as $\ftil$, with no component of $G\setminus(X\cup L)$ seeing all of $X\cup L$\\
\bottomrule
\end{tabular}
\end{center}
The labeled instance ($\pi=\mathrm{id}$) is exactly~\cite{HLV2023}; there
$\cfix_{\mathrm{lab}}(k)=\sum_t g_1(t,0,k)$.

\begin{theorem}[Labeled recurrences; \cite{HLV2023}, Lemmas 3.4--3.8]\label{thm:hlv}
For $\pi=\mathrm{id}$ the eight functions satisfy the recurrences
\begin{align*}
g_1(t,x,k)&=\textstyle\sum_{l=1}^{k}\binom{k}{l}f(t,x,l,k{-}l),\\
f(t,x,l,k)&=\textstyle\sum_{k'=1}^{k}\binom{k}{k'}\ftil(t,x,l,k')\,g(t{-}2,x{+}l,k{-}k',x),\\
g(t,x,k,z)&=\textstyle\sum_{k'=0}^{k}\binom{k}{k'}\gtil(t,x,k',z)\,g(t{-}1,x,k{-}k',z),\\
\gtil(t,x,k,z)&=\textstyle\sum_{k'=1}^{k}\sum_{x'=1}^{x}\!\big(\binom{x}{x'}{-}\binom{z}{x'}\big)\binom{k-1}{k'-1}g_1(t,x',k')\,\gtil(t,x,k{-}k',z),
\end{align*}
together with the stated recurrences for $\ghat$, $g_2$, and $\ftil$ (Lemma 3.8), all proved by
conditioning on the component containing the least non-boundary vertex and recursing on the rest.
\end{theorem}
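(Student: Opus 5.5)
The statement is the labeled recurrence theorem of~\cite{HLV2023}, specialized to $\pi=\mathrm{id}$. The plan is to re-derive each recurrence by the single device named in the statement. For each counter function we condition on the connected component (or evaporation-layer piece) containing the least non-boundary vertex. We identify that piece with an object counted by a simpler function, and show that the remainder is an arbitrary object of the appropriate residual function. With $\pi=\mathrm{id}$ every orbit is a single vertex, so Lemma~\ref{lem:orbit} is vacuous. The state vectors collapse to integers $x,k,z,l$, and the block generating functions \eqref{eq:gf} reduce to ordinary binomial coefficients. Binomial coefficients count the choice of labels, since the functions are defined up to relabeling of the new vertices by order-preserving maps.

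The recurrences are proved in the following order.

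\textbf{The $g_1$ recurrence.} A connected $G$ with $G\setminus X$ one component seeing all of $X$ determines its last layer $L_G(X)$. Because $G$ is connected and evaporates at exactly $t$, that layer lies in a single component, and $X\cup L$ is a clique. Choosing which $l$ of the $k$ labels form $L$ gives $\binom{k}{l}$. After relabeling $L$ to $[x+1,x+l]$, the graph is counted by $f(t,x,l,k-l)$.

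\textbf{The $f$ recurrence.} Among the components of $G\setminus(X\cup L)$, split off those evaporating at exactly $t-1$; there must be at least one, since otherwise $L$ would evaporate earlier. Together with $L$ these pieces are counted by $\ftil$. The remaining $k-k'$ vertices form components evaporating by time $t-2$. They must avoid seeing all of $X$ alone, which is the sub-boundary condition $z=x$, and so they are counted by $g(t-2,x+l,k-k',x)$. The factor $\binom{k}{k'}$ chooses the labels.

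\textbf{The $g$ recurrence.} Split the components of $G\setminus X$ into those evaporating at exactly $t$ and the rest.

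\textbf{The $\gtil$ recurrence.} Isolate the component of the least new vertex, which contributes the factor $\binom{k-1}{k'-1}$. Let $x'$ be the set of boundary vertices it sees; the component is then a $g_1(t,x',k')$ object. The choices of $x'$ allowed by the ``sees $X\setminus[z]$'' condition number $\binom{x}{x'}-\binom{z}{x'}$. The residue is $\gtil(t,x,k-k',z)$.

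The $\ghat$, $g_2$ and $\ftil$ recurrences follow by the same template.

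The main obstacle is not the bookkeeping but the structural facts each decomposition silently uses. First, evaporation times of a component of $G\setminus X$ depend only on that component together with the clique it sees, so components evaporate independently. Second, the time of $G$ is the maximum over components. Third, in the $f$ step, the vertices remaining after deleting $X\cup L$ evaporate one step earlier in $G\setminus L$ than in $G$, which is where $t-1$ and $t-2$ enter. These facts are exactly Lemmas~3.4--3.7 of~\cite{HLV2023}, and I would cite them rather than reprove them. The statement is attributed to that paper, so the honest proof here is a verification that our definitions in the table coincide with~\cite[Def.~3.3]{HLV2023} when $\pi=\mathrm{id}$. That verification reduces the present theorem to the published lemmas.
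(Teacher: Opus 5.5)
Your route is the paper's route. The paper gives no argument of its own and imports these recurrences from \cite{HLV2023}, so checking that the table's definitions specialize to \cite[Def.~3.3]{HLV2023} at $\pi=\mathrm{id}$ and then citing is all that is done there. The decompositions you give for the four displayed recurrences have the right shape, and your $g$ and $\gtil$ sketches are accurate.

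The justifications you add around them contain a false claim, and it would break the $f$ recurrence if you relied on it. You also attribute it to Lemmas 3.4--3.7, but those lemmas are the recurrences themselves, not structural facts. The claim is that vertices outside $X\cup L$ evaporate one step earlier in $G\setminus L$. This fails: take $X=\varnothing$ and the path with edges $ab,bc$. Then $t=2$ and $L=\{b\}$, yet $a,c$ evaporate at time $1$ both in $G$ and in $G\setminus L$.

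The fact actually needed is different. Enlarging the exception clique from $X$ to $X\cup L$ leaves every other evaporation time unchanged. No vertex of $L$ is simplicial in $G_{s-1}$ for $s<t$, so both processes have the same layers $L_1,\dots,L_{t-1}$. Hence the rest of $G$ evaporates with exception $X\cup L$ by time $t-1$. Some component does so at exactly $t-1$, since otherwise $G_{t-2}=G[X\cup L]$ would be a clique and $L$ would leave at step $t-1$. The indices $t-1$ and $t-2$ come from splitting off the components at exactly $t-1$, not from deleting $L$.

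Two smaller points.
\begin{enumerate}
\item The $z=x$ in $g(t-2,x+l,k-k',x)$ requires each residual component to have a neighbor in $[x+l]\setminus[x]=L$. This is forced by connectivity of $G\setminus X$; it is not a condition about seeing $X$.
\item In the $g_1$ step, $X\cup L$ is a clique because the component sees all of $X$, not because $G$ is connected. A counterexample to your stated reason is a single new vertex adjacent to only one end of an edge $X$. The correct argument: deleting simplicial vertices preserves both the connectivity of $G_s\setminus X$ and its set of neighbors in $X$. So the last layer is a connected graph of simplicial vertices, hence a clique, and each of its vertices is then adjacent to all of $X$.
\end{enumerate}
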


\subsection{From labeled to equivariant}\label{sec:lift}

We lift Theorem~\ref{thm:hlv} to general $\pi$; the resulting eight equivariant recurrences are
written explicitly in Appendix~\ref{app:rec}. The four recurrences that compose other functions
($g_1,f,g,\ftil$) lift verbatim: their derivations partition the new vertices by structural role
(membership in $L_G(X)$, evaporation time, or seeing-all status), and by Lemma~\ref{lem:orbit} these
roles are orbit-stable, so the per-size product of binomials
$\binom{k}{\cdot}=\prod_i\binom{k_i}{\cdot}$ replaces the single binomial and each constituent
function is its equivariant counterpart. The remaining recurrences condition on a single component;
for them we use the following principle.

\begin{proposition}[Lift Principle]\label{prop:lift}
Let $F\in\{\gtil,\ghat,g_2,\fhat\}$, defined by a condition $\Phi_F$ on the components of $G\setminus
X$ (an evaporation pattern and a boundary-seeing pattern). The equivariant recurrence for $F$ is the
recurrence of Theorem~\ref{thm:hlv} for $F$ with the single-component count $g_1(t,x',k')$ replaced by
$\mathrm{COD}_{\Phi_F}(t,x',\kappa)$ of \eqref{eq:codsum}, restricted to the boundary condition of
$\Phi_F$.
\end{proposition}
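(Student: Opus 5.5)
To prove the Lift Principle, I would re-run the labeled derivation of~\cite{HLV2023} for $F$, replacing ``component'' by ``component-bundle'' throughout. The only place the proof depends on vertices rather than orbits is where it peels off the piece containing a distinguished element. Fix the linear order on new orbits induced by the least vertex of each orbit, and let $O^\ast$ be the least new orbit. By Lemma~\ref{lem:orbit}, evaporation times are constant on orbits. By Lemma~\ref{lem:bundle}, $\pi$ permutes the components of $G\setminus X$, so these components fall into bundles. Every orbit therefore lies in exactly one bundle, namely the union of the components meeting it. Conditioning on the bundle $\orb^\ast$ that contains $O^\ast$ gives a well-defined decomposition $G\setminus X=\orb^\ast\sqcup R$. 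Here $R$ is $\pi$-invariant and is a union of whole orbits.

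The first step is to check that the condition $\Phi_F$ factors across this decomposition. Each of $\gtil,\ghat,g_2,\fhat$ is defined by a condition that every component must satisfy individually (evaporating at exactly $t$, and seeing or not seeing all of $X$, or of $X\cup L$ for $\fhat$). The presence of other components does not change any of these properties. Evaporation of a disjoint union with exception clique $X$ runs componentwise, since simpliciality of $v$ depends only on $N(v)$, which lies in its component together with $X$. Hence $G$ satisfies $\Phi_F$ if and only if $\orb^\ast$ does and $R$ (with the same $X$) does. Counting $R$ gives the same function $F$ on the remaining orbit counts $k-\kappa$. The vertex-level factor $\binom{k-1}{k'-1}$ becomes $\prod_i\binom{k_i-[s_i=|O^\ast|]}{\kappa_i-[s_i=|O^\ast|]}$, which is the number of ways to choose the other orbits of $\orb^\ast$. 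For $g_2$ I would use the analogous ``at least one more bundle'' bookkeeping, exactly as in Lemma~3.8.

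The second step is to count the bundles $\orb^\ast$ spanning a given orbit multiset $\kappa$ with prescribed boundary behaviour. Every bundle has some period $d$, and $d$ divides every orbit size in $\kappa$ by Lemma~\ref{lem:sub}. Theorem~\ref{thm:cod} counts the period-$d$ bundles as $d^{|\kappa|-1}g_1^{\langle N/d\rangle}(t,x',\kappa/d)$. Summing over $d$ gives \eqref{eq:codsum}.

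The boundary-seeing sum $\sum_{x'}(\binom{x}{x'}-\binom{z}{x'})$ must be replaced by the block-level sum. The key point, which I expect to be the main obstacle, is that the seeing condition in $\Phi_F$ concerns the bundle as a whole: ``sees all of $X$'', ``sees $X\setminus[z]$'', or ``does not see all of $X$''. The sub-world function $g_1^{\langle N/d\rangle}$, however, is indexed by which blocks the representative $K$ sees. I would use the transitivity argument given after \eqref{eq:gf}: the bundle is adjacent to all of a boundary orbit $O_i$ iff $K$ meets at least one block of it. This lets each whole-bundle condition be translated into a coefficient extraction. ``Sees all of $X$'' becomes $\Gamma^{\mathrm{cov}}_d(x)$. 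The inclusion--exclusion correction for $z$ becomes $\Gamma_d(x)-\Gamma_d(z)$ restricted to nonempty seeing. ``Not all'' becomes $\Gamma_d-\Gamma_d^{\mathrm{cov}}$.

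Care is also needed on one more point. In the labeled setting, $g_1$ is defined as seeing all of its own boundary $x'$. For $\ghat$, where no component sees all of $X$, one must instead check that a bundle may fail to see all of $X$ even though $K$ sees every block of some orbit it touches. I would verify this orbit by orbit using Lemma~\ref{lem:block}. I would also verify that the choice of which specific blocks are seen is free and independent of the $d^{|\kappa|-1}$ alignments, which the proof of Theorem~\ref{thm:cod} already asserts. Multiplying the bundle count by the count for $R$ and summing over $\kappa$ then reproduces the labeled recurrence with $g_1$ replaced by $\mathrm{COD}_{\Phi_F}$.
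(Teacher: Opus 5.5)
Your route is the paper's: mark an orbit, condition on the component-bundle containing it, count that bundle by Theorem~\ref{thm:cod} with block-level boundary coefficients, and multiply by the count for the remaining orbits. Marking the least orbit instead of the least orbit of largest size is immaterial. However, two steps you treat as routine do not go through as written.

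\emph{The $g_2$ bookkeeping.} ``At least one more bundle, exactly as in Lemma~3.8'' is wrong in the equivariant setting.

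A bundle of period $d>1$ that sees all of $X$ already consists of $d\ge2$ components, so it is a complete $g_2$-configuration by itself and must be counted with an \emph{empty} remainder. Replacing $g_1$ by $\mathrm{COD}$ in the labeled recurrence while keeping the factor $R=g_1+g_2$ never counts it, because $R$ vanishes on the empty remainder. That is why the paper's (R6) uses $R(k-\kappa)$ only in the $d=1$ summand and $R_0(k-\kappa)=[\,k-\kappa=\mathbf 0\,]+R(k-\kappa)$ in the $d>1$ summands. For $g_2$ the lift is not a verbatim substitution.

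The loss already shows for $\pi=(a\,b)$ and the path with edges $ac,bc$. Here $X=\varnothing$, $L=\{c\}$, and $G\setminus(X\cup L)$ is the single period-$2$ bundle $\{a\},\{b\}$. This configuration lies in neither the $\fhat$ term nor the $g_1\fhat$ term of (R7). So with your $g_2$ the program returns $\cfix(2\,1)=1$ (only the triangle) instead of $2$.

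\emph{The factorization claim.} You assert that each $\Phi_F$ is a conjunction of per-component conditions, so the remainder is an $F$-configuration with the same parameters on $k-\kappa$. Besides $g_2$, this fails for $\fhat$.

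$\fhat$ inherits from $f$ the requirement $L_G(X)=L$, i.e.\ no vertex of $L$ is simplicial in $G_{t-2}$. When no component sees all of $X\cup L$, this says every vertex of $L$ is adjacent to \emph{some} component. That is a covering condition shared among the components, not a property of each one. The remainder only has to cover the $L$-orbits that $\orb^\ast$ does not touch.

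So the recursive term is one of two things, with $\Gamma^{\mathrm{tou}}_d$ recording the touched orbits; this is the successor $\mathrm{nx}_{lp}$ of (R8):
\begin{itemize}
\item $\fhat$ with the touched $L$-orbits moved into the boundary (keeping $z$), or
\item $\ghat(t-1,\dots)$ once every $L$-orbit is touched.
\end{itemize}
Your ``same $F$'' remainder loses, for instance, every configuration in which one non-covering bundle touches all of $L$, since $\fhat(\cdot,\mathbf 0)=0$.

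Separately, your two steps read ``sees all of $X$'' differently: per component in the first, per bundle (via $\Gamma^{\mathrm{cov}}_d$) in the second. They disagree exactly on period-$d$ bundles whose representative meets every boundary orbit in some but not all of its blocks. You must commit to one reading.
\begin{itemize}
\item With $\Gamma^{\mathrm{cov}}_d$, as in the paper, such bundles leave $\ghat,\fhat$ and are counted in $g_2$. This is sound: a vertex of $L$ adjacent to a non-full component $C$ is not simplicial in $G_{t-2}$, because the last layer of $C$ is complete to $N(C)$ and misses some vertex of the clique $X\cup L$.
\item The per-component reading would instead require the all-blocks monomial of $\Gamma_d(x)$.
\end{itemize}
The check proposed in your last paragraph concerns a different, unproblematic case.
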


\begin{proof}
Fix the distinguished orbit $o^\ast$ of largest present size (least vertex among ties). By
Lemmas~\ref{lem:orbit}--\ref{lem:bundle} the components of $G\setminus X$ form component-bundles and
$o^\ast$ lies in a unique one, $\mathcal C^\ast$, spanning some $\kappa$ with $\kappa_{p}\ge1$ at the
marked size $p$. Conditioning on $\mathcal C^\ast$: the orbits joining it are counted by
$\binom{k_p-1}{\kappa_p-1}\prod_{i\ne p}\binom{k_i}{\kappa_i}$ (the marked orbit forced in, the rest
free); $\mathcal C^\ast$ itself is counted by $\mathrm{COD}_{\Phi_F}(t,x',\kappa)$ by
Theorem~\ref{thm:cod}, with the boundary coefficient of $\Phi_F$ applied per orbit for the $d=1$ term
and per block for $d>1$ (Lemma~\ref{lem:block}); and $G\setminus(X\cup\mathcal C^\ast)$ is an
$F$-configuration on the remaining orbits, counted by the recursive $F$-term. The three are
independent (disjoint vertex sets) and the counts multiply, exactly as in the proof of
Theorem~\ref{thm:hlv}; the only change is the substitution of $\mathrm{COD}$ for the single-component
count.
\end{proof}

\begin{proposition}[All from connected]\label{prop:fix}
Write the cycle type as $k=(k_1,\dots,k_r)$ over the distinct sizes $s_1<\cdots<s_r$, let
$p=\max\{i:k_i>0\}$, and let $\mathrm{BUNDLE}(\mu)$ be the number of single component-bundles
spanning $\mu$, namely
\[
\mathrm{BUNDLE}(\mu)=\cfix(\mu)+\sum_{d>1,\,d\mid\gcd(\mu)}d^{\,|\mu|-1}\,\cfix^{\langle N/d\rangle}(\mu/d).
\]
Then
\[
\fix(k)=\sum_{\substack{\mu\le k\\ \mu_p\ge1}}\Big[\tbinom{k_p-1}{\mu_p-1}\textstyle\prod_{i\ne p}\binom{k_i}{\mu_i}\Big]\,
   \mathrm{BUNDLE}(\mu)\,\fix(k-\mu),\qquad \fix(\mathbf 0)=1.
\]
\end{proposition}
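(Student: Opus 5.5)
We prove the recurrence by conditioning on the component-bundle that contains a distinguished orbit, following the pattern of the Lift Principle (Proposition~\ref{prop:lift}) but with no boundary ($X=\varnothing$).

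\textbf{Setup.} Fix a $\pi$-invariant chordal graph $G$ on the orbits of type $k$. Let $o^\ast$ be the orbit of largest size $s_p$ that contains the least vertex among all orbits of that size. By Lemma~\ref{lem:bundle}, $\pi$ permutes the components of $G$. The union $\mathcal C^\ast$ of the component-orbit containing $o^\ast$ is therefore $\pi$-invariant.

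\textbf{Step 1: $\mathcal C^\ast$ is a union of whole orbits.} Since $\pi$-invariant sets are unions of $\pi$-orbits, $\mathcal C^\ast$ spans a sub-multiset $\mu\le k$ of orbits with $\mu_p\ge1$. The complement $G\setminus\mathcal C^\ast$ is then a $\pi$-invariant chordal graph on type $k-\mu$, because induced subgraphs of chordal graphs are chordal.

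\textbf{Step 2: the map is a bijection.} We show that $G$ corresponds bijectively to a triple consisting of
\begin{itemize}
\item[(a)] the choice of which orbits form $\mu$,
\item[(b)] a single bundle on those orbits, and
\item[(c)] an arbitrary $\pi$-invariant chordal graph on the remaining orbits.
\end{itemize}
For (a), $o^\ast$ is forced into $\mu$ and the other orbits are free, giving $\binom{k_p-1}{\mu_p-1}\prod_{i\ne p}\binom{k_i}{\mu_i}$ choices. The distinguished orbit is determined by $G$'s orbit structure alone, not by its edges, so the choice is canonical. Conversely, given such a triple, take the disjoint union. It is chordal, $\pi$-invariant, and has $\mathcal C^\ast$ as the bundle containing $o^\ast$, since there are no edges between the two parts. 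The two maps are mutually inverse. Term (c) is counted by $\fix(k-\mu)$, with $\fix(\mathbf 0)=1$ covering the case $\mu=k$.

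\textbf{Step 3: counting single bundles (the main obstacle).} It remains to identify term (b), the number of single component-bundles spanning exactly $\mu$, with $\mathrm{BUNDLE}(\mu)$.

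First we classify bundles by their period $d$. By Lemma~\ref{lem:sub}, each component $K$ meets each spanned orbit of size $s$ in $s/\gcd(d,s)$-blocks. The partition argument of Lemma~\ref{lem:sub} then forces $d\mid s$ for every spanned orbit, so $d$ divides $\gcd(\mu)$.

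\emph{Period $d=1$.} The bundle is a single connected $\pi$-invariant graph, counted by $\cfix(\mu)$.

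\emph{Period $d>1$.} We apply Theorem~\ref{thm:cod} with empty boundary and sum over the evaporation time $t$. This gives $d^{|\mu|-1}\sum_t g_1^{\langle N/d\rangle}(t,0,\mu/d)$.

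To identify this with $\cfix^{\langle N/d\rangle}(\mu/d)$, we use that $\sum_t g_1(t,0,\cdot)$ counts connected invariant graphs: every connected chordal graph evaporates at a unique time, and with empty boundary it trivially sees all of $X$.

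\textbf{Remaining check.} The delicate point is whether Theorem~\ref{thm:cod}'s hypothesis ``$d$ divides every size in $\mu$'' is exactly the set of admissible periods. That is, each $d\mid\gcd(\mu)$ with $d>1$ must actually arise, and no other $d$ can occur. The forward direction was shown above. The converse is the surjectivity part of Theorem~\ref{thm:cod}: every sub-configuration together with an alignment yields a bundle of period exactly $d$.

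Summing the three factors over all admissible $\mu$ gives the stated recurrence.
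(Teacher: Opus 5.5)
Your proof is correct and follows the paper's argument. Like the paper, you condition on the component-bundle containing the distinguished largest-size orbit $o^\ast$, count the co-spanned orbits by the marked binomial coefficient, and multiply by $\mathrm{BUNDLE}(\mu)$ and $\fix(k-\mu)$; your Step~3 goes a little further than the paper, which simply takes the displayed formula for $\mathrm{BUNDLE}(\mu)$ from Theorem~\ref{thm:cod} and \eqref{eq:codsum}, by explicitly checking that every period divides $\gcd(\mu)$ and that summing $g_1^{\langle N/d\rangle}(t,0,\mu/d)$ over $t$ gives $\cfix^{\langle N/d\rangle}(\mu/d)$.
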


\begin{proof}
A $\pi$-invariant chordal graph is the disjoint union of its component-bundles
(Lemma~\ref{lem:bundle}), and chordality is tested componentwise. Distinguish the orbit $o^\ast$ of
size $s_p$ with least vertex; it lies in a unique bundle, spanning $\mu$ with $\mu_p\ge1$. Among the
$k_p$ orbits of size $s_p$, $o^\ast$ is forced in and the other $\mu_p-1$ chosen from $k_p-1$; for
$i\ne p$ the $\mu_i$ orbits are any of the $k_i$. Given the orbit set, the bundle ranges over
$\mathrm{BUNDLE}(\mu)$ possibilities and the remainder independently over $\fix(k-\mu)$. Summing over
$\mu$ gives the identity.
\end{proof}

\begin{theorem}[Correctness]\label{thm:correct}
Assume Theorem~\ref{thm:hlv}. For every cycle type $\lambda$ with orbit vector $k$,
\[
\cfix(\lambda)=\sum_t g_1(t,0,k),\qquad
\fix(\lambda)=\text{Proposition~\ref{prop:fix}},
\]
and hence $\textup{A048192}(n)=\sum_{\lambda\vdash n}\cfix(\lambda)/z_\lambda$ and
$\textup{A048193}(n)=\sum_{\lambda\vdash n}\fix(\lambda)/z_\lambda$ by~\eqref{eq:burnside}.
\end{theorem}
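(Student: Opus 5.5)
The plan is an induction that checks each equivariant function computes the quantity named in the table, followed by two direct specializations. I will induct on the pair (evaporation time $t$, total new-vertex count $\sum_i s_ik_i$), ordered lexicographically, with a secondary induction on the group order $N$ to handle the sub-world calls $g_1^{\langle N/d\rangle}$. The induction hypothesis is that every equivariant function evaluated at a smaller argument, or in a smaller world, equals the corresponding count of $\pi$-invariant chordal configurations.

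\emph{Base cases.} These are $t=0$, $k=\mathbf 0$, and $N=1$. For $t=0$ and $k=\mathbf 0$ the configurations are trivial. For $N=1$ the world is labeled, so the statement is exactly Theorem~\ref{thm:hlv}, which we assume.

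\emph{Inductive step.} The functions fall into two groups.
\begin{enumerate}[topsep=2pt,itemsep=1pt]
\item For the composing recurrences ($g_1,f,g,\ftil$), I reuse the labeled derivations. Each derivation is a partition of the new vertices by structural role. By Lemma~\ref{lem:orbit} every role is constant on $\pi$-orbits, so each partition becomes a partition of orbits, counted size class by size class by $\prod_i\binom{k_i}{\cdot}$. Each piece of a $\pi$-invariant graph is again $\pi$-invariant with a $\pi$-invariant exception clique, so the inductive hypothesis applies to every factor.
\item For $\gtil,\ghat,g_2,\fhat$, I invoke Proposition~\ref{prop:lift}. The distinguished bundle is counted by $\mathrm{COD}$, which is correct by Theorem~\ref{thm:cod} together with the inductive hypothesis in the smaller world of order $N/d$.
\end{enumerate}

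\emph{Main obstacle.} I expect the hardest step to be the boundary-seeing conditions for $d>1$. I must check that ``sees all of $X$'' for a bundle corresponds exactly to the $\Gamma_d^{\mathrm{cov}}$ coefficient. The tool here is the transitivity argument following~\eqref{eq:gf}. I must also check that the inclusion--exclusion in $z$ transfers: the bundle avoids $[z]$ exactly when its representative avoids every block of the orbits in $[z]$. That is again Lemma~\ref{lem:block}, because a $\pi^d$-invariant set meets each orbit in a union of whole blocks.

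\emph{First identity.} For $\cfix(\lambda)$, take $x=0$, so $X=\varnothing$. Every component then trivially sees all of $X$. A connected $\pi$-invariant chordal graph on the orbits $k$ is exactly one component, hence exactly one bundle of period $1$, and it evaporates at a unique time $t$. It is therefore counted once, in $g_1(t,0,k)$. Summing over $t$ gives $\cfix(\lambda)$.

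\emph{Second identity.} For $\fix(\lambda)$, the formula is Proposition~\ref{prop:fix}. One point needs checking there: $\mathrm{BUNDLE}(\mu)$ uses $\cfix^{\langle N/d\rangle}(\mu/d)$, which is the $x=0$ case of Theorem~\ref{thm:cod}. The sub-world values are correct by the first identity applied in the world of order $N/d$, using the induction on $N$.

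\emph{Conclusion.} Substituting both identities into~\eqref{eq:burnside}, together with its connected analogue, yields the formulas for A048192 and A048193.
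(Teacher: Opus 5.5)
Your proposal is correct and is essentially the paper's proof made more explicit. Both arguments use orbit-evaporation for well-definedness, lift the composing recurrences verbatim, handle $\tilde g,\hat g,g_2,\hat f$ with the Lift Principle plus the Component-Orbit Decomposition, recurse on the world order down to $\pi=\mathrm{id}$, and then specialize to $x=0$ and to Proposition~\ref{prop:fix}.

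There are two small fixes. First, the lexicographic order on $(t,\sum_i s_ik_i)$ does not strictly decrease along same-$t$ calls such as $\gtil\to g_1$ in (R4) with $\kappa=k$, $f\to\ftil$ in (R2) with $m=k$, and $\ftil\to\fhat$ in (R7), so you need a third induction key ranking the function types. Second, subtracting $\Gamma_d(z)$ encodes that the bundle's view lies \emph{inside} $[z]$, not that it avoids $[z]$; the same $\pi$-invariance of $[z]$ gives that condition.
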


\begin{proof}
By Lemma~\ref{lem:orbit} the orbit-level evaporation and all eight conditions are well-defined. The
four composition recurrences lift verbatim and the four single-component recurrences hold by the
Lift Principle (Proposition~\ref{prop:lift}) with the Component-Orbit Decomposition
(Theorem~\ref{thm:cod}); the recursion on the order-$(N/d)$ sub-worlds is well-founded
(Lemma~\ref{lem:sub}) with base case $\pi=\mathrm{id}$, where the functions and recurrences are those
of Theorem~\ref{thm:hlv}. Therefore $\sum_t g_1(t,0,k)$ counts the connected $\pi$-invariant chordal
graphs and Proposition~\ref{prop:fix} counts all of them.
\end{proof}

\section{Running time}\label{sec:time}

For a cycle type $\lambda$ with distinct sizes $s_1<\cdots<s_r$ and multiplicities $m_i$, write
$P(\lambda)=\prod_{i=1}^r(m_i+1)$.

\begin{lemma}\label{lem:percost}
Let $r=r(\lambda)$ be the number of distinct part sizes of $\lambda$. The dynamic program for
$\lambda$, including its divisor-bundle sub-instances, runs in time
\[
T(\lambda)\ \le\ n^{O(1)}\,P(\lambda)^{O(1)}\,(n+1)^{\,O(r)}.
\]
\end{lemma}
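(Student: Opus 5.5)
We bound the cost as (number of distinct DP states) $\times$ (work per state), summed over the main instance and all divisor-bundle sub-instances.

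\emph{State count.} In the world of $\lambda$ the state variables are $t$ and the vectors $x,k,z,l\in\Z_{\ge0}^r$. Every orbit count of size $s_i$ that ever occurs is at most $m_i$: new orbits are a sub-multiset of $\lambda$, boundary orbits are former new orbits, and $z\le x$. The evaporation time satisfies $t\le n$. So one function admits at most $(n+1)\,P(\lambda)^4$ states, and there are eight functions.

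\emph{Work per state.} Each recurrence of Appendix~\ref{app:rec} sums over a sub-vector $k'\le k$, giving at most $P(\lambda)$ terms. It also sums over a boundary-seeing vector: either a sub-vector $x'\le x$, giving at most $P(\lambda)$ terms, or, for $d>1$, a block vector $x'=(a_\tau)_\tau$. There are at most $r$ sub-types $\tau$, and each $a_\tau\le\beta_d(x)_\tau\le n$, so this contributes at most $(n+1)^r$ terms. The sum over $d\mid\gcd(\mu)$ in \eqref{eq:codsum} has at most $n$ terms.

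The coefficients need more care. Each binomial product and each coefficient of $\Gamma_d(x)$ or $\Gamma_d^{\mathrm{cov}}(x)$ must be available in $n^{O(1)}(n+1)^{O(r)}$ time. For this we precompute these coefficients once per $(d,x)$ by multiplying the at most $r$ polynomial factors in at most $r$ variables, each of degree at most $n$. This gives at most $(n+1)^r$ coefficients, each obtained with polynomially many arithmetic operations on $O(n^2)$-bit integers.

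\emph{Sub-instances.} This is the main obstacle. A sub-world has order $N/d$ and type $\mu/d$ together with the boundary blocks. We need to show the combined number of distinct sub-world states stays within $P(\lambda)^{O(1)}(n+1)^{O(r)}$ at every recursion depth. The plan is to memoize sub-world states globally, keyed by (sub-type multiset of new sub-orbits, block profile $x'$, $t$, $z$, $l$).

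\begin{itemize}
\item The sub-types arising from $\lambda$ are the values $s_i/\gcd(d,s_i)$, so a sub-world has at most $r$ distinct sizes.
\item Its new-orbit vector is dominated componentwise by some $\mu\le k$ after rescaling, so at most $P(\lambda)$ choices.
\item Its boundary profile has at most $r$ coordinates, each at most $n$, so at most $(n+1)^r$ choices.
\item Since $d\mid N$ and the chain of divisors only shrinks, composite recursion $d$ then $d'$ in the sub-world equals a single step by $dd'$ from the top. So the relevant worlds are indexed by one divisor $e\mid N$.
\item The number of such divisors $e$ that act nontrivially is at most the number of $\gcd$-patterns $(\gcd(e,s_i))_i$ with $\gcd(e,s_i)\mid s_i$. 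This is at most $\prod_i \tau(s_i)\le n^{O(r)}$, where here $\tau(s_i)$ denotes the divisor count of $s_i$ (not a sub-type).
\end{itemize}

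Multiplying these gives $n^{O(r)}P(\lambda)^{O(1)}$ distinct memoized states overall. Multiplying by the per-state work $n^{O(1)}P(\lambda)^{O(1)}(n+1)^{O(r)}$ yields the claimed bound.

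The delicate point is the identification of iterated sub-worlds with single-divisor worlds. It needs Lemma~\ref{lem:sub} applied twice, checking that a block of a block is a $\langle\pi^{dd'}\rangle$-orbit, so that no state explosion over divisor chains occurs.
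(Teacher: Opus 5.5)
Your proposal is correct and follows the paper's argument. Both count $O(nP(\lambda)^4)$ states per function, charge per-state work of $n^{O(1)}P(\lambda)^{O(1)}(n+1)^{r}$ to the block generating functions, and pay once for each reachable divisor sub-instance, of which there are at most $\prod_i d(s_i)$; your collapse of divisor chains to a single $e\mid N$ is the observation the paper uses implicitly.

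Your sub-instance accounting is more careful than the paper's in two places. First, you note that sub-world boundary profiles range over up to $(n+1)^r$ block vectors, not just $P(\lambda)$ values. Second, you bound $\prod_i d(s_i)$ by $n^{O(r)}$, whereas the paper's $2^{O(r)}$ does not hold in general. Both corrections are absorbed into the $(n+1)^{O(r)}$ factor, so the bound stands either way.
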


\begin{proof}
\emph{States.} Each function is memoized over an index $t\le n$ and between one and four tuples
($x,k,z$, and $l$ for the $f$-functions), each bounded coordinatewise by the multiplicities, so
taking at most $P(\lambda)$ values; there are $O\!\big(n\,P(\lambda)^4\big)$ states per function.

\emph{Block generating functions.} A recurrence's bundle term builds a multivariate generating
function~\eqref{eq:gf} in $r'\le r$ block variables of total degree at most the number of blocks
$\sum_i\gcd(d,s_i)x_i\le n$. Such a polynomial has at most $\prod_\tau(\beta_d(x)_\tau+1)\le(n+1)^{r}$
monomials, and is assembled in $(n+1)^{O(r)}$ ring operations. (This corrects a naive
$n^{O(1)}$ estimate: the monomial count is exponential in the number of distinct block sub-types.)

\emph{Work per state.} Besides the generating function, a recurrence sums over a sub-multiset of $k$
($\le P(\lambda)$ terms; $\le P(\lambda)^2$ for the absorption recurrence) and over divisors
(at most $d(c)=n^{o(1)}\le n$), and evaluates the sub-world $g_1$ at each generating-function monomial (a memo
lookup). All counts are non-negative integers bounded by the number of labeled chordal graphs on $n$
vertices, $\le 2^{\binom n2}$, hence of $O(n^2)$ bits, so each arithmetic operation costs
$\mathrm{poly}(n)$. Work per state is therefore $n^{O(1)}\,P(\lambda)^{2}\,(n+1)^{r}$.

\emph{Sub-instances.} A bundle term calls a sub-instance whose multiplicity vector has
$P(\lambda')\le P(\lambda)$ (the map $s\mapsto s/\gcd(d,s)$ only merges parts, and
$(a{+}1)(b{+}1)\ge a{+}b{+}1$) and at most $r$ distinct sizes; the recursion depth is $\le\log_2 n$
(each step at least halves the largest size). The number of distinct reachable sub-instances is at
most $\prod_i d(s_i)\le 2^{O(r)}$; charging each by the same bound multiplies the total by $2^{O(r)}$,
which is absorbed into the $(n+1)^{O(r)}$ factor (as $(n+1)^{O(r)}\ge 2^{O(r)}$), leaving the stated
form.
\end{proof}

\begin{lemma}\label{lem:maxprod}
$\displaystyle\max_{\lambda\vdash n}P(\lambda)\le e^{(\sqrt2+o(1))\sqrt n}=e^{O(\sqrt n)}$, and the
number of distinct part sizes satisfies $r(\lambda)\le\sqrt{2n}$.
\end{lemma}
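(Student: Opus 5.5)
We prove the second claim first, since it feeds the first. If $\lambda$ has $r$ distinct part sizes $s_1<\dots<s_r$, then $s_i\ge i$. Hence
\[
n\ \ge\ \sum_{i=1}^r s_i\ \ge\ \sum_{i=1}^r i\ =\ \frac{r(r+1)}{2}\ >\ \frac{r^2}{2},
\]
so $r<\sqrt{2n}$.

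For the bound on $P(\lambda)=\prod_i(m_i+1)$ we split the factors by the size of $m_i$. The constraint is $\sum_i s_i m_i=n$ with the $s_i$ distinct. Taking logarithms,
\[
\log P(\lambda)=\sum_{i=1}^r\log(1+m_i).
\]
Since $s_i\ge i$, we have $m_i\le n/s_i\le n/i$.

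The crude estimate $\log(1+m_i)\le\log(1+n)$ only gives $r\log(n+1)=O(\sqrt n\log n)$. That would still suffice for the final $n^{O(\sqrt n)}$ running time, but it does not give the stated $e^{O(\sqrt n)}$. To reach $e^{O(\sqrt n)}$ we use $\log(1+m)\le m$ on the indices where $m_i$ is small, and a sharper count on the indices where $m_i$ is large. Fix a threshold $A$ and split:
\begin{itemize}[topsep=2pt,itemsep=1pt]
\item \emph{Indices with $m_i\le A$.} Each contributes $\log(1+m_i)\le\log(1+A)$, and by the first part there are at most $\sqrt{2n}$ of them. Their total is at most $\sqrt{2n}\,\log(1+A)$.
\item \emph{Indices with $m_i>A$.} Here $s_i m_i\ge s_i A$, and distinct sizes give $\sum s_i\ge j(j+1)/2$ over $j$ such indices. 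So $A\,j^2/2\le n$, i.e.\ $j\le\sqrt{2n/A}$. Each of these contributes at most $\log(1+n)$.
\end{itemize}

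The second bucket gives $\sqrt{2n/A}\,\log(n+1)$. To beat the logarithm we refine this dyadically. Group the indices by $m_i\in(2^{a},2^{a+1}]$. In group $a$ there are $j_a\le\sqrt{2n/2^{a}}$ indices, each contributing $O(a)$. Summing,
\[
\sum_a a\sqrt{2n}\,2^{-a/2}\ =\ O(\sqrt n).
\]
This proves $\log P=O(\sqrt n)$. A cleaner alternative is the elementary inequality $\log(1+m)\le c\sqrt{m}$ together with Cauchy--Schwarz and $\sum_i s_i\le n$:
\[
\sum_i\sqrt{m_i}\ =\ \sum_i\sqrt{s_i m_i}\big/\sqrt{s_i}\ \le\ \sqrt{n\textstyle\sum_i 1/s_i}.
\]
This only yields $\sqrt{n\log n}$, so the dyadic grouping is the route we would actually use for $O(\sqrt n)$.

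The main obstacle is the sharp constant $\sqrt2+o(1)$. We would not fight for it directly. Instead we note that $P(\lambda)\le$ the number of divisors-type products, and appeal to the known asymptotic for $\max\prod(m_i+1)$ (equivalently, the maximal number of sub-multisets of a partition of $n$), which is a standard Hardy--Ramanujan-type optimisation. Heuristically the optimum takes all small sizes with $m_s\approx$ const, and a Lagrange computation gives the exponent.

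If pressed, we would prove only the $e^{O(\sqrt n)}$ bound rigorously via the dyadic argument above, since that is all Theorem~\ref{thm:time} needs, and treat the constant $\sqrt2$ as a remark.
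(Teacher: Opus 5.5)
Your bound $r<\sqrt{2n}$ is exactly the paper's argument. Your dyadic count also gives a rigorous $\log P(\lambda)=O(\sqrt n)$ once two slips are fixed. First, an index with $m_i\in(2^a,2^{a+1}]$ contributes $\log(1+m_i)\le(a+2)\log 2$, i.e.\ $O(a+1)$ rather than $O(a)$. Second, the indices with $m_i=1$ need their own bucket: there are at most $\sqrt{2n}$ of them, contributing $\log 2$ each. That weaker bound is all Theorem~\ref{thm:time} needs.

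The gap is that the lemma asserts the explicit bound $e^{(\sqrt2+o(1))\sqrt n}$, and you do not prove it. Dyadic bucketing loses a constant factor by design, so it cannot reach $\sqrt 2$. The appeals to ``divisors-type products'' and to an unnamed ``known asymptotic'' are not arguments. The heuristic you sketch also points the wrong way. Constant multiplicities $m_s\equiv m$ on sizes $s\le S$ force $S\approx\sqrt{2n/m}$ and give only $\log P\approx\sqrt{2n}\,\max_m\log(1+m)/\sqrt m\approx1.14\sqrt n$. The extremal profile instead has $1+m_i\approx C/i$, which decays in $i$. No Hardy--Ramanujan-type input is needed.

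The missing idea is short and uses only what you already wrote. Since $s_i\ge i$, you have $\sum_i i\,m_i\le n$. Relax to real $m_i\ge0$ and bound the concave objective by Lagrangian weak duality. For any $C>0$ and $m\ge 0$, maximizing over $m$ shows
\[
\log(1+m)-\tfrac iC\,m\;\le\;
\begin{cases}
\log\tfrac Ci-1+\tfrac iC, & i\le C,\\
0, & i>C,
\end{cases}
\]
and therefore
\[
\log P(\lambda)\;\le\;\frac nC+\sum_{i\le C}\Big(\log\frac Ci-1+\frac iC\Big).
\]
Take $C=\sqrt{2n}$. Stirling gives $\sum_{i\le C}\log(C/i)=C+O(\log C)$, and direct summation gives $\sum_{i\le C}(i/C-1)=-C/2+O(1)$. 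Hence
\[
\log P(\lambda)\le\tfrac C2+C-\tfrac C2+O(\log C)=\sqrt{2n}+O(\log n),
\]
which is the stated $(\sqrt2+o(1))\sqrt n$.

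This is the paper's proof, phrased there through the KKT conditions. It takes $1+m_i=C/i$ for $i\le C$, uses the tight constraint $\sum_{i\le C}(C-i)\approx C^2/2\le n$, and evaluates the optimal value as $C\log C-\log C!=C+O(\log C)$.
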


\begin{proof}
The distinct sizes satisfy $s_i\ge i$, so $\sum_i i\,m_i\le\sum_i s_i m_i=n$. Relaxing $m_i\ge0$ to
the reals can only increase $\max\sum_i\log(1+m_i)$ subject to $\sum_i i\,m_i\le n$. The relaxed
objective is concave; a Lagrange multiplier gives $1+m_i=1/(\mu i)$ for $i\le C:=1/\mu$ and $m_i=0$
beyond, and the constraint $\sum_{i\le C}(\tfrac1\mu-i)\le n$ yields $C^2/2\,(1+o(1))\le n$, i.e.\
$C\le(\sqrt2+o(1))\sqrt n$. Then $\sum_{i\le C}\log\frac{C}{i}=C\log C-\log(C!)=C+O(\log C)$ by
Stirling, so $\log\max_\lambda P(\lambda)\le(\sqrt2+o(1))\sqrt n$. Finally, since the $s_i$ are
distinct, $1+2+\cdots+r\le\sum_i s_i\le\sum_i s_i m_i=n$, whence $r(r+1)/2\le n$ and $r\le\sqrt{2n}$.
\end{proof}

\begin{theorem}[Sub-exponential running time]\label{thm:time}
Computing $\textup{A048192}(n)$ and $\textup{A048193}(n)$ by the equivariant Burnside method takes
time $n^{O(\sqrt n)}=e^{O(\sqrt n\log n)}$.
\end{theorem}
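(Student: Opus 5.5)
The plan is to sum the per-cycle-type cost of Lemma~\ref{lem:percost} over all partitions $\lambda\vdash n$, then bound each factor uniformly using Lemma~\ref{lem:maxprod}. Theorem~\ref{thm:correct} reduces both sequences to computing $\fix(\lambda)$ and $\cfix(\lambda)$ for every $\lambda\vdash n$. Combining these with the weights $1/z_\lambda$ of \eqref{eq:burnside} is cheap: there are $p(n)$ terms, each $z_\lambda\le n!$ has $O(n\log n)$ bits, and the rational arithmetic is polynomial in $n$ per term. The total time is therefore
\[
\sum_{\lambda\vdash n}\bigl(T(\lambda)+n^{O(1)}\bigr).
\]

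First, I would bound the number of terms. The Hardy--Ramanujan estimate gives $p(n)=e^{O(\sqrt n)}$; an elementary bound such as $p(n)\le e^{\pi\sqrt{2n/3}}$ suffices, and even a crude bound $p(n)\le n^{O(\sqrt n)}$ would do. One way to get the crude bound is to encode a partition by its at most $\sqrt{2n}$ distinct sizes and their multiplicities, each at most $n$. This gives $p(n)\le (n+1)^{2\sqrt{2n}}$ directly from the second half of Lemma~\ref{lem:maxprod}, which keeps the argument self-contained.

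Second, I would bound each term using Lemma~\ref{lem:percost}: $T(\lambda)\le n^{O(1)}P(\lambda)^{O(1)}(n+1)^{O(r)}$. By Lemma~\ref{lem:maxprod}, $P(\lambda)^{O(1)}\le e^{O(\sqrt n)}$ and $(n+1)^{O(r)}\le (n+1)^{O(\sqrt{2n})}=e^{O(\sqrt n\log n)}$. Hence $T(\lambda)\le e^{O(\sqrt n\log n)}$ uniformly in $\lambda$. Multiplying by $p(n)$ and absorbing $e^{O(\sqrt n)}$ and $n^{O(1)}$ into $e^{O(\sqrt n\log n)}=n^{O(\sqrt n)}$ gives the claim. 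The connected count is produced by the same program as the $g_1$ values, and $\fix$ is then obtained via Proposition~\ref{prop:fix} at a cost of at most $P(\lambda)^2$ times the cost of the $\cfix$ lookups. Both sequences therefore fall under the same bound.

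The main subtlety is making sure the constants hidden in the $O(\cdot)$ exponents of Lemma~\ref{lem:percost} are absolute, independent of $\lambda$ and $n$. This matters particularly for the divisor sub-instances, whose cycle types $\lambda'$ are not partitions of $n$ but of smaller totals. Each sub-instance satisfies $P(\lambda')\le P(\lambda)$ and has at most $r$ distinct sizes, as noted in the proof of Lemma~\ref{lem:percost}, so it obeys the same bound. I would also check whether memo tables are shared across different $\lambda$; sharing would only reduce the work, so treating each $\lambda$ independently is a valid upper bound.
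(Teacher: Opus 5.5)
Your proposal is correct and follows the paper's proof. Like the paper, you sum the bound of Lemma~\ref{lem:percost} over all $\lambda\vdash n$, use Lemma~\ref{lem:maxprod} to control $P(\lambda)$ and $r\le\sqrt{2n}$, and absorb the factor $p(n)=e^{O(\sqrt n)}$. Your additions — the Burnside bookkeeping cost, the elementary bound $p(n)\le(n+1)^{2\sqrt{2n}}$, and the $P(\lambda)^2$ cost of assembling $\fix$ via Proposition~\ref{prop:fix} — are valid refinements that the paper leaves implicit.
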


\begin{proof}
The total time is $\sum_{\lambda\vdash n}T(\lambda)$ plus polynomial Burnside bookkeeping. By
Lemma~\ref{lem:percost}, $T(\lambda)\le n^{O(1)}P(\lambda)^{O(1)}(n+1)^{O(r)}$. By
Lemma~\ref{lem:maxprod}, $P(\lambda)^{O(1)}\le e^{O(\sqrt n)}$ and $r\le\sqrt{2n}$, so
$(n+1)^{O(r)}\le(n+1)^{O(\sqrt{2n})}=e^{O(\sqrt n\log n)}$; hence $T(\lambda)\le e^{O(\sqrt n\log n)}=
n^{O(\sqrt n)}$. The number of partitions is $p(n)=e^{\Theta(\sqrt n)}$ (Hardy--Ramanujan), so
$\sum_{\lambda\vdash n}T(\lambda)\le p(n)\cdot\max_\lambda T(\lambda)\le n^{O(\sqrt n)}$.
\end{proof}

\begin{remark}
The bound is sub-exponential but not polynomial: the partitions into distinct parts, of which there
are up to $r=\Theta(\sqrt n)$, force the $\sqrt n$ in the exponent. For families with a bounded
number of distinct part sizes, $P(\lambda)=n^{O(1)}$ and the cost is polynomial. The general
parameterized bound $O(2^{7\mu}n^9)$ of~\cite{HLV2025} is, for the cycle types with $\mu=\Theta(n)$
that dominate the Burnside sum, single-exponential in $n$; Theorem~\ref{thm:time} is the precise
sense in which the present method is feasible across the full partition set.
\end{remark}

\section{Computational results}\label{sec:results}

The recurrences of Section~\ref{sec:dp} and the Burnside sum~\eqref{eq:burnside} were implemented and
evaluated exactly in rational arithmetic. Table~\ref{tab:vals} lists both sequences; the entries for
$n\le15$ reproduce the previously published values, and $n=16$ through $20$ are new ($n=16$--$18$ have
been submitted to the OEIS; $n=19,20$ are reported here for the first time).

We stress that extending these sequences is not a matter of generating one more graph. The Burnside
sum~\eqref{eq:burnside} requires the exact value of $\fix(\lambda)$ for \emph{every} cycle type
$\lambda\vdash n$, and the number of cycle types is the partition number $p(n)$, which grows
sub-exponentially ($p(15)=176$, $p(20)=627$). The difficulty is concentrated in the cycle types with
many small orbits, where direct enumeration is hopeless and the equivariant recurrences of
Section~\ref{sec:dp} are essential; it is computing all of these $\fix(\lambda)$, not enumerating
graphs, that had kept the sequences at $n=15$.

\begin{table}[h]
\centering\small
\begin{tabular}{rrr}
\toprule
$n$ & A048193$(n)$ (all) & A048192$(n)$ (connected)\\
\midrule
8  & $2119$ & $1614$\\
9  & $14524$ & $11911$\\
10 & $126758$ & $109539$\\
11 & $1392387$ & $1247691$\\
12 & $19109099$ & $17566431$\\
13 & $326005775$ & $305310547$\\
14 & $6905776799$ & $6558690953$\\
15 & $181945055235$ & $174688164414$\\
\midrule
16 & $\mathbf{5985406996403}$ & $\mathbf{5796153514484}$\\
17 & $\mathbf{247178491630853}$ & $\mathbf{241003010628949}$\\
18 & $\mathbf{12895963060540295}$ & $\mathbf{12642592677074970}$\\
19 & $\mathbf{855912598965399807}$ & $\mathbf{842762851699294393}$\\
20 & $\mathbf{72786012927793961715}$ & $\mathbf{71916937400532750123}$\\
\bottomrule
\end{tabular}
\caption{Unlabeled chordal graph counts. Bold entries ($n\ge16$) are new.}
\label{tab:vals}
\end{table}

The computation was validated four independent ways.
\begin{enumerate}[topsep=2pt,itemsep=1pt]
\item \emph{Known data.} The Burnside sum reproduces all published values of both sequences for
$1\le n\le 15$ exactly.
\item \emph{Direct enumeration.} For every cycle type small enough to enumerate, $\fix(\lambda)$ and
$\cfix(\lambda)$ from the dynamic program agree exactly with a backtracking enumerator over orbit
edge-classes (which lays down all $\pi$-invariant edge-orbits and prunes on hereditary chordality).
This was checked across all orbit sizes $1\le s\le 16$, every single $c$-cycle, and many mixed and
composite types; a representative sample appears in Table~\ref{tab:valid}.
\item \emph{Labeled limit.} $\fix(1^n)$ reproduces the labeled chordal totals
\href{https://oeis.org/A058862}{A058862}$(n)$ for $1\le n\le 10$.
\item \emph{Euler transform.} Chordal graphs are closed under disjoint union, so the total count is
the Euler transform of the connected count; this identity holds for the computed values through
$n=20$, so the two independently assembled sequences are mutually consistent at every new term.
\end{enumerate}
For the two largest terms $n=19,20$, which lie beyond the reach of the backtracking oracle of~(2), the
validation rests on the method's exact reproduction of all known data~(1) and the labeled limit~(3),
together with the Euler cross-check~(4), under which the all-count and connected-count sequences
certify each other.

\begin{table}[h]
\centering\small
\begin{tabular}{lrrrc}
\toprule
$\lambda$ & $n$ & $\fix(\lambda)$ & $\cfix(\lambda)$ & brute\\
\midrule
$1^{3}$ & 3 & 8 & 4 & \checkmark \\
$2\,1^{2}$ & 4 & 15 & 7 & \checkmark \\
$2^{2}$ & 4 & 13 & 7 & \checkmark \\
$3\,1^{2}$ & 5 & 15 & 7 & \checkmark \\
$3\,2\,1$ & 6 & 26 & 11 & \checkmark \\
$3^{2}$ & 6 & 22 & 15 & \checkmark \\
$4\,1^{2}$ & 6 & 22 & 10 & \checkmark \\
$4\,2\,1$ & 7 & 69 & 37 & \checkmark \\
$4^{2}$ & 8 & 64 & 37 & \checkmark \\
$6\,3$ & 9 & 37 & 23 & \checkmark \\
$3^{3}$ & 9 & 897 & 745 & \checkmark \\
$8\,4$ & 12 & 92 & 52 & \checkmark \\
$9\,3$ & 12 & 34 & 22 & \checkmark \\
$10\,5$ & 15 & 113 & 95 & \checkmark \\
\bottomrule
\end{tabular}
\caption{Per-cycle-type values from the dynamic program, each agreeing exactly with direct
backtracking enumeration. The composite sizes $4,6,8,9,10$ and the cross-divisor types
($6\,3$, $8\,4$, $9\,3$, $10\,5$) exercise the divisor-bundle recursion.}
\label{tab:valid}
\end{table}

By Theorem~\ref{thm:time} the method extends to larger $n$ in sub-exponential time; the values above
were obtained on a single processor.

\paragraph{Data availability.} The two computed tables ($n=1,\dots,20$) have SHA-256 digests
\begin{center}\footnotesize\ttfamily
\begin{tabular}{ll}
A048192 (connected) & 2933d6dd7c2fb4133032158c4a5230514762f762975734e9df598f106ef0ac94\\
A048193 (all) & 95eda5ffd21902003b79624f16dd8fae021e4a00dfd3f78e7611e95824f95abb\\
\end{tabular}
\end{center}

\section{Conclusion}

We have given a concrete equivariant dynamic program for counting permutation-invariant chordal
graphs, built on the labeled evaporation program of~\cite{HLV2023}. Its correctness rests on a
divisor-bundle decomposition (Theorem~\ref{thm:cod}) that reduces each equivariant recurrence to its
labeled counterpart, and the full Burnside computation runs in sub-exponential time
(Theorem~\ref{thm:time}). The resulting enumeration extends the unlabeled chordal sequences A048192
and A048193 past their previous frontier. The orbit-bundle lifting strategy may be adaptable to other
graph classes that admit a compatible labeled decomposition recurrence; we leave such extensions to
future work.

\appendix

\section{The explicit equivariant recurrences}\label{app:rec}

We collect the eight recurrences over the state variables of Section~\ref{sec:state}. Vectors are
indexed by the sizes $S=(s_1<\cdots<s_r)$; $\binom{k}{m}:=\prod_i\binom{k_i}{m_i}$; $p=p(k)$ is the
largest index with $k_p>0$; and the marking coefficient is
\[
M(k,\kappa)=\binom{k_p-1}{\kappa_p-1}\prod_{i\ne p}\binom{k_i}{\kappa_i}.
\]
A bundle of period $d$ absorbs only sizes divisible by $d$; we write $\kappa\rhd d$ for ``$\kappa_i>0
\Rightarrow d\mid s_i$,'' and $\kappa/d$ for the vector with $s_i\mapsto s_i/d$. The boundary terms use
the block generating functions $\Gamma_d,\Gamma_d^{\mathrm{cov}}$ of~\eqref{eq:gf}; $[\,\Phi\,]_a$ is
the coefficient of $\prod_\tau v_\tau^{a_\tau}$, and $g_1^{\langle N/d\rangle}(t,a,\nu)$ is the
sub-world function with block-boundary $a$. Define the three peeled-component operators
(the $d=1$ summand $A$ and the $d>1$ summands $B_d$, for the three boundary conditions
$\Phi\in\{\mathrm{see},\mathrm{not\text{-}all},\mathrm{cover}\}$):
\begin{align*}
A^{\mathrm{see}}(\kappa)&=\sum_{0\ne x'\le x}\Big(\tbinom{x}{x'}-\tbinom{z}{x'}\Big)g_1(t,x',\kappa),\\
A^{\mathrm{not\text{-}all}}(\kappa)&=\sum_{0\ne x'\lneq x}\Big(\tbinom{x}{x'}-\tbinom{z}{x'}\Big)g_1(t,x',\kappa),\\
B_d^{\mathrm{see}}(\kappa)&=d^{|\kappa|-1}\sum_{a}\big[\Gamma_d(x)-\Gamma_d(z)\big]_a\,g_1^{\langle N/d\rangle}(t,a,\kappa/d),\\
B_d^{\mathrm{cov}}(\kappa)&=d^{|\kappa|-1}\sum_{a}\big[\Gamma_d^{\mathrm{cov}}(x)\big]_a\,g_1^{\langle N/d\rangle}(t,a,\kappa/d),\\
B_d^{\mathrm{not\text{-}all}}(\kappa)&=d^{|\kappa|-1}\sum_{a}\big[\Gamma_d(x)-\Gamma_d^{\mathrm{cov}}(x)-\Gamma_d(z)\big]_a\,g_1^{\langle N/d\rangle}(t,a,\kappa/d)\quad(0\text{ if }z=x).
\end{align*}

\paragraph{Base cases and invalid states.} Every function returns $0$ outside its domain and on
impossible states: $g_1=0$ if $t\le0$ or $k=0$; $g(0,x,k,z)=[\,k=0\,]$ and $g=0$ for $t<0$;
$\gtil(t,x,0,z)=1$; $\ghat(t,x,0,z)=1$ and $\ghat=0$ for $t<1$; $g_2=0$ if $t\le0$ or $k=0$;
$f=0$ for $t\le0$, $f(1,x,l,k)=[\,k=0\,]$, and $f=0$ for $k=0,\ t\ge2$; and $\ftil=\fhat=0$ for
$t\le1$ or $k=0$. Empty sums are $0$ and empty products $1$; $\binom{a}{b}=0$ for $b>a$ or $b<0$, so
a sub-multiset that does not fit contributes nothing, as does any state with a negative coordinate.
A clique-number bound $\omega$ is carried with the recursion, and $f=0$ once
$\sum_i s_i(x_i+l_i)>\omega$. The recursion is finite: $t$ strictly decreases through $f\to g\to\gtil$,
and the world order $N/d<N$ strictly decreases at each bundle step (Lemma~\ref{lem:sub}). Then:
\begin{align}
g_1(t,x,k)&=\sum_{0\ne\ell\le k}\tbinom{k}{\ell}\,f(t,x,\ell,k-\ell),\tag{R1}\\
f(t,x,l,k)&=\sum_{m\le k}\tbinom{k}{m}\,\ftil(t,x,l,m)\,g(t-2,x+l,k-m,x),\tag{R2}\\
g(t,x,k,z)&=\sum_{m\le k}\tbinom{k}{m}\,\gtil(t,x,m,z)\,g(t-1,x,k-m,z),\tag{R3}\\
\gtil(t,x,k,z)&=\sum_{\kappa_p\ge1}M(k,\kappa)\Big[A^{\mathrm{see}}(\kappa)+\!\!\sum_{d>1,\,\kappa\rhd d}\!\!B_d^{\mathrm{see}}(\kappa)\Big]\gtil(t,x,k-\kappa,z),\tag{R4}\\
\ghat(t,x,k,z)&=\sum_{\kappa_p\ge1}M(k,\kappa)\Big[A^{\mathrm{not\text{-}all}}(\kappa)+\!\!\sum_{d>1,\,\kappa\rhd d}\!\!B_d^{\mathrm{not\text{-}all}}(\kappa)\Big]\ghat(t,x,k-\kappa,z),\tag{R5}\\
g_2(t,x,k)&=\sum_{\kappa_p\ge1}M(k,\kappa)\Big[g_1(t,x,\kappa)\,R(k-\kappa)+\!\!\sum_{d>1,\,\kappa\rhd d}\!\!B_d^{\mathrm{cov}}(\kappa)\,R_0(k-\kappa)\Big],\tag{R6}\\
\ftil(t,x,l,k)&=\fhat(t,x,l,k)\notag\\
&\quad+\sum_{0\ne m\le k}\tbinom{k}{m}\Big[g_1(t{-}1,x{+}l,m)\,\fhat(t,x,l,k{-}m)+g_2(t{-}1,x{+}l,m)\,\ghat(t{-}1,x{+}l,k{-}m,x)\Big].\tag{R7}
\end{align}
where in (R6) $R(\cdot)=g_1+g_2$ and $R_0(m)=[\,m=0\,]+R(m)$. Finally the absorption recurrence: with
$\Gamma^{\mathrm{tou}}_d(\sigma)=\prod_i\big((1+v_{\tau_i})^{\gcd(d,s_i)}-1\big)^{\sigma_i}$ the
``touched-orbit'' generating function, the weight
$w(xp,lp)=\big(\binom{x}{xp}-[\,lp{=}0\,]\binom{z}{xp}\big)\binom{l}{lp}$, and the successor
$\mathrm{nx}_{lp}(k')=\ghat(t{-}1,x{+}lp,k',z)$ if $lp=l$ and $\fhat(t,x{+}lp,z,l{-}lp,k')$
otherwise,
\begin{equation}\tag{R8}
\fhat(t,x,z,l,k)=\!\!\sum_{\kappa_p\ge1}\!\!M(k,\kappa)\!\!\sum_{\substack{xp\le x,\ lp\le l\\ (xp,lp)\notin\{0,(x,l)\}}}\!\!\!\! w(xp,lp)\,\mathrm{nx}_{lp}(k{-}\kappa)\,\Theta_\kappa(xp{+}lp),
\end{equation}
where the touched-component value, the absorption analogue of the bracket in (R4), is
\[
\Theta_\kappa(\sigma)=g_1(t{-}1,\sigma,\kappa)+\!\!\sum_{d>1,\,\kappa\rhd d}\!\! d^{|\kappa|-1}\sum_a\big[\Gamma^{\mathrm{tou}}_d(\sigma)\big]_a\,g_1^{\langle N/d\rangle}(t{-}1,a,\kappa/d).
\]
Setting all sizes to $1$ ($r=1$, $N=1$, no $d>1$) recovers the labeled recurrences of
Theorem~\ref{thm:hlv}.

\section{A worked example: $\lambda=2^2$}\label{app:ex}

Let $\pi=(1\,2)(3\,4)$, so $N=2$ and the only sizes are $\{2\}$. We compute $\fix(2^2)$ from
Proposition~\ref{prop:fix} and Lemma~\ref{lem:bundle}, displaying the divisor-bundle terms.

A single component-bundle on $j$ of the $2$-cycles has, by Theorem~\ref{thm:cod},
\[
\mathrm{BUNDLE}(2^j)=\underbrace{\cfix(2^j)}_{d=1}+\underbrace{2^{\,j-1}\,\cfix^{\langle1\rangle}(1^j)}_{d=2},
\qquad \cfix^{\langle1\rangle}(1^j)=\text{\href{https://oeis.org/A007134}{A007134}}(j),
\]
the $d=2$ term being a swapped pair of labeled components. With $\cfix(2^1)=1$,
$\cfix(2^2)=7$ and $A007134(1)=A007134(2)=1$,
\[
\mathrm{BUNDLE}(2^1)=1+2^{0}\!\cdot\!1=2,\qquad
\mathrm{BUNDLE}(2^2)=7+2^{1}\!\cdot\!1=9.
\]
Proposition~\ref{prop:fix} with $k=(2)$ (marked size $2$) gives, over $\mu\in\{(1),(2)\}$,
\[
\fix(2^2)=\binom{1}{0}\mathrm{BUNDLE}(2^1)\,\fix(2^1)+\binom{1}{1}\mathrm{BUNDLE}(2^2)\,\fix(\varnothing).
\]
Here $\fix(2^1)=\mathrm{BUNDLE}(2^1)=2$ and $\fix(\varnothing)=1$, so
\[
\fix(2^2)=1\cdot2\cdot2+1\cdot9\cdot1=13,
\]
matching the direct enumeration in Table~\ref{tab:valid}. The contributions $2^{0}\cdot1$ and
$2^{1}\cdot1$ are exactly the period-$2$ bundles (an edge between the two cyclically-paired vertices,
in its $2^{\,j-1}$ alignments), which the $d=1$ count alone would miss.

\end{document}